\definecolor{mygray}{gray}{.9}
\newcommand{\ket}[1]{| #1 \rangle} 
\newcommand{\bra}[1]{\langle #1 |} 
\def \d {\mathrm{d}}
\newcommand{\bbR}{\mathbb{R}}
\newcounter{parentalgorithm}
\newtheorem{theorem}{Theorem}[section]
\newtheorem{lemma}{Lemma}[section]
\newtheorem{corollary}{Corollary}[section]
\theoremstyle{remark}
\newtheorem{remark}{\bf Remark}[section]
\numberwithin{equation}{section}
\begin{document}

	\title{Transmutation based Quantum Simulation for Non-unitary Dynamics}

	\author[1]{Shi Jin\thanks{shijin-m@sjtu.edu.cn}}
	\author[2,3]{Chuwen Ma\thanks{cwma@math.ecnu.edu.cn}}
	\author[4,5,6]{Enrique Zuazua\thanks{enrique.zuazua@fau.de}}
	\affil[1]{School of Mathematical Sciences, Institute of Natural Sciences, MOE-LSC, Shanghai Jiao Tong University, Shanghai, 200240, China}
	\affil[2]{School of Mathematical Sciences, Key Laboratory of MEA, Ministry of Education, East China Normal University, Shanghai 200241, China,
	}
	\affil[3]{
		Shanghai Key Laboratory of PMMP, East China Normal University, Shanghai 200241, China}
   \affil[4]{Chair for Dynamics, Control, Machine Learning and Numerics, Alexander von
   Humboldt-Professorship, Department of Mathematics, Friedrich-Alexander-Universit\"at
   Erlangen-N\"urnberg, 91058 Erlangen, Germany}
   \affil[5]{Chair of Computational Mathematics, Fundaci\'on Deusto, Av. de las
   Universidades, 24, 48007 Bilbao, Basque Country, Spain}
   
   \affil[6]{Departamento de Matem\'aticas, Universidad Aut\'onoma de Madrid,
   28049 Madrid, Spain}
	\maketitle
	
	\begin{abstract}
		We present a quantum algorithm for simulating dissipative diffusion dynamics generated by positive semidefinite operators of the form $A=L^\dagger L$, a structure that arises naturally in standard discretizations of elliptic operators. Our main tool is the Kannai transform, which represents the diffusion semigroup $e^{-AT}$, where $T$ is the final simulation time, as a Gaussian-weighted superposition of unitary wave propagators. For target accuracy $\varepsilon$, this representation leads to a linear-combination-of-unitaries implementation with a Gaussian tail and yields query complexity $\tilde{O}(\sqrt{\|A\|\,T\,\log(1/\varepsilon)})$, up to the standard dependence on state-preparation and output-norm factors, improving the scaling in $\|A\|,\,T\,$ and $\varepsilon$ compared with generic Hamiltonian-simulation-based methods. We instantiate the method for the heat equation and biharmonic diffusion under non-periodic physical boundary conditions, and further use it as a subroutine for constant-coefficient linear parabolic surrogates arising in entropy-penalization schemes for the viscous Hamilton–Jacobi equations. In the long-time regime, and under a spectral-gap assumption, the same
		framework leads to a structured quantum linear solver by exploiting convergence to the steady state. In the normalized positive definite case, denoting by \(\kappa\) the condition number of \(A\), 
		the solver outputs an \(\varepsilon\)-approximation to the quantum state proportional to \(A^{-1}\bm b\), using \(\widetilde O\!\left(\kappa^{3/2}\log^2(\kappa/\varepsilon)\right)\) calls to the block-encoding oracle.
		\end{abstract}

		\textbf{Keywords}: Kannai transform, Linear-combination-of-unitaries, Optimal query complexity, Linear solver

	\section{Introduction}
Dissipative partial differential equations (PDEs), such as the heat equation and more general
parabolic systems, are fundamental models for diffusion, relaxation, and transport across science
and engineering. In high-dimensional regimes, classical discretization-based solvers may become
computationally prohibitive, motivating alternative paradigms. Quantum computers can represent
high-dimensional vectors as amplitude-encoded quantum states and, in principle, yield speedups for
certain linear-algebraic primitives
\cite{hidary2019quantum,NC2010Quantum,preskill2018quantum,rieffel2011quantum}.

A core difficulty is that quantum hardware natively implements \emph{unitary} time evolution, whereas
diffusion is intrinsically \emph{non-unitary}. This mismatch helps explain why Hamiltonian
simulation---the simulation of Schr\"odinger dynamics---became an early and central target of quantum
algorithms, leading to a mature toolkit with near-optimal complexity guarantees
\cite{BCC15,Berry-Childs-Kothari-2015,BCC17,LC17,Low2019qubitization,CGJ19,Berry2019Dyson,JLY22SchrLong,An2022timedependent,fang2023time}.
Designing quantum algorithms for dissipative dynamics therefore requires representing a non-unitary
semigroup in terms of unitary primitives while retaining favorable dependence on the final time and
the target accuracy.

\medskip
\noindent\textbf{Our approach.}
We propose a quantum algorithm based on the classical \emph{Kannai transformation} (also known as
\emph{transmutation} or \emph{subordination})
\cite{Kannai77,SchillingSongVondracek12,ZuazuaSlides}.
For generators of the form
\[
\mathcal{A}=\mathcal{L}^\dagger\mathcal{L}\succeq 0,
\]
the Kannai representation expresses the diffusion semigroup $e^{-t\mathcal{A}}$ as a Gaussian-weighted
superposition of unitary wave propagators associated with $\mathcal{L}$. Algorithmically, we simulate
the corresponding reversible dynamics and recover the diffusion solution by evaluating a Gaussian
convolution in an auxiliary time variable.

This viewpoint is related to dilation-based \emph{wrap-and-average} methods (e.g.\ Schr\"odingerization
and its Fourier/contour variants), where the cost is governed by truncating an auxiliary kernel
\cite{JLY24PRL,JLY22SchrLong,JLM24SchrInhom,ALL2023LCH,ACL2023LCH2,Li25,WLX25}.
The key quantitative distinction is that exponentially decaying kernels typically require truncation
radii $R=\Theta(T\log(1/\varepsilon))$, whereas the Gaussian Kannai kernel yields
$R=\Theta(\sqrt{T \log(1/\varepsilon)})$, improving the $\varepsilon$-and-$T$-dependence of the resulting
algorithms.

\medskip
\noindent\textbf{Related work.}
Existing approaches to non-unitary simulation include:
(i) discretize and solve via quantum linear system algorithms (QLSAs)
\cite{Berry-2014,BerryChilds2017ODE,KroviODE,JLY2022multiscale,WL24,DLX25,Childs-Liu-2020,Berry2024Dyson};
(ii) dilate to a unitary PDE  and apply Hamiltonian simulation
\cite{JLY24PRL,JLY22SchrLong,JLM24SchrInhom}, with closely related variants such as LCHS
\cite{ALL2023LCH,ACL2023LCH2} and contour-based matrix decomposition (CMD) \cite{WLX25};
and (iii) treat non-unitary generators via block-encoding or Lindbladian techniques
\cite{ShangGuoAnZhao25,fang2023time}.
Our method belongs to the wrap-and-average family but exploits the Gaussian tail of the Kannai kernel
to sharpen the dependence on $\varepsilon$ for the structured class
$\mathcal{A}=\mathcal{L}^\dagger\mathcal{L}$.

\medskip
\noindent\textbf{Problem setting and contributions.}
We consider inhomogeneous linear evolution problems
\begin{equation}\label{eq:linear_eq}
	\frac{\mathrm{d}u}{\mathrm{d}t}=-\mathcal{A}u+f,
	\qquad u(0,\cdot)=u_0,
\end{equation}
where $\mathcal{A}=\mathcal{L}^\dagger\mathcal{L}$ is self-adjoint and non-negative and $f$ is a
time-independent source. After spatial discretization, this yields a matrix $A=L^\dagger L$.
We assume standard quantum access to the discretized operators via block-encodings and to the input
data via state preparation (see Section~\ref{sec:quantum-implementation}).

Our main contributions are as follows.
\begin{itemize}
	\item \textbf{Kannai-based quantum simulation of diffusion.}
	We combine the Gaussian decay of the Kannai kernel with high-order quadrature in the auxiliary variable, the linear-combination-of-unitaries (LCU) framework, and near-optimal Hamiltonian simulation to obtain a quantum algorithm for \eqref{eq:linear_eq}.
	The query complexity to a block-encoding of the Hermitian dilation Hamiltonian scales as
	\[
	\widetilde{O}\!\bigl(u_r\,\sqrt{T\|A\|\log(1/\varepsilon)}\bigr),
	\]
	where $T$ is the final time, $\varepsilon$ is the target precision, and $u_r$ captures the relative
	size of the input and forcing. Up to polylogarithmic factors, this improves the dependence on
	$\varepsilon$, $T$, and $\|A\|$ compared with wrap-and-average methods based on exponentially
	decaying kernels.
	
	\item \textbf{PDE instantiations and boundary conditions.}
	We apply the framework to the heat equation and biharmonic diffusion, and show that Dirichlet and Neumann boundary conditions can be handled without changing the leading-order complexity under standard sparse discretizations on bounded domains.
	
	\item \textbf{A structured quantum linear solver from long-time dynamics.}
	In the strongly dissipative regime, we solve $A\bm{x}=\bm{b}$ with $A=L^\dagger L$ by running the dynamics to
	$T=\Theta(\kappa\log(1/\varepsilon))$ and applying the Kannai-based simulator. This yields a quantum
	linear solver with query complexity
	\[
	\widetilde{O}\bigl(\kappa^{3/2}\log^2(\kappa/\varepsilon)\bigr),
	\]
	{\color{black} improving over early HHL-type
		algorithms and direct inverse block-encodings with naive postselection, which
		have quadratic condition-number dependence in standard formulations.}
	
	\item \textbf{Further transmutation constructions.}
	We collect additional transmutation-based reductions representing target dynamics as superpositions
	of reversible evolutions, including algorithms for second-order-in-time models (e.g.\ Euler--Poisson--Darboux via spherical means) and an alternative reduction from the heat equation to unitary transport via transport--heat averaging.
\end{itemize}

\medskip
\noindent\textbf{Organization.}
Section~\ref{subsec:transmutation} reviews transmutation and rewrites it as a first-order Hamiltonian
system suitable for quantum simulation. Section~\ref{sec:numerical_dis} presents discretization,
Gaussian quadrature, and error analysis. Section~\ref{sec:quantum-implementation} gives the quantum
implementation and query bounds, and Section~\ref{sec:cost_of_exs} illustrates them on heat, viscous
Hamilton--Jacobi, and biharmonic diffusion. Section~\ref{sec:longtime} addresses the long-time regime
and the induced linear solver. Section~\ref{sec:other_transmutation} collects additional
transmutation-based constructions. 
Section~\ref{sec:numerical-tests} presents numerical experiments that validate the correctness of the Kannai-transform--based LCU method.
Conclusions are in Section~\ref{sec:conclusions}.

\noindent\textbf{Notation.}
For $j\in[N]$, $\ket{j}\in\mathbb{C}^N$ denotes the $j$th computational basis vector.
For any nonzero vector \(\bm u\), \(\ket{\bm u}:=\bm u/\|\bm u\|_2\) denotes the
associated normalized amplitude-encoded quantum state. Thus bold symbols denote
unnormalized vectors, while kets denote normalized quantum states unless stated
otherwise.
We write $\widetilde{O}(\cdot)$ to suppress polylogarithmic factors, and use $O(\cdot)$
with constants independent of the problem parameters of interest. All logarithms are natural.

\section{Transmutation and unitary superpositions}\label{subsec:transmutation}
Transmutation methods provide a systematic way to represent one evolution problem
as a superposition of another, often simpler, evolution whose analysis or numerical
treatment is more accessible.
We refer to
\cite{CarrollShowalter1976,KatrakhovSitnik1984,Sitnik2012,Lions1956,Lions1959Weinstein,GriegoHersh1971}
for representative developments.
In this section we use transmutation techniques to express a target non-unitary
dynamics as a superposition of unitary evolutions. This structure is
particularly suitable for quantum algorithm design.

\subsection{Kannai transformation for the linear system}

We consider the setting in which the spatial operator admits the factorization
\begin{equation}\label{eq:operator A}
	\mathcal{A} = \mathcal{L}^{\dagger}\mathcal{L},
\end{equation}
together with boundary conditions that make $\mathcal{A}$ nonnegative and
self-adjoint.
A prototypical example is $\mathcal{L}=-\nabla$ as a map from $H^1_0(\Omega)$ to
$L^2(\Omega)$. In this case $\mathcal{A}=-\Delta_x$ with homogeneous Dirichlet
boundary conditions.
We study the linear system with a time-independent source term
\begin{equation}\label{eq:diffu u}
	\frac{\d u (t,x)}{\d t} = -\mathcal{A}u(t,x) + f(x),
	\qquad u(0,x)=u_0(x).
\end{equation}

Kannai \cite{Kannai77} shows that one can represent $u$ via a transmutation
operator acting only on the time variable:
\begin{equation}\label{eq:Hw}
	u(t,x) = (\mathcal{H}w)(t,x)
	= \int_{\mathbb{R}} K(t,s)\,w(s,x)\,\mathrm{d}s.
\end{equation}
The kernel is Gaussian,
\begin{equation}\label{eq:Gaussian kernel}
	K(t,s) = \frac{1}{\sqrt{4\pi t}}\,e^{-s^{2}/(4t)}.
\end{equation}
The auxiliary function $w$ solves the inhomogeneous wave-type equation
\begin{equation}\label{eq:w-wave-A-LL}
	w_{ss}(s,x) = -\mathcal{A}w(s,x) + f(x),
	\qquad w(0,x)=u_0(x),\quad w_s(0,x)=0.
\end{equation}

\subsection{Unitary embedding via first-order factorization}
To reveal the unitary dynamics underlying \eqref{eq:w-wave-A-LL}, we recast the
second-order equation as a first-order system. Introduce an auxiliary variable
$v$ by
\[
\frac{\mathrm{d}}{\mathrm{d}s}v(s,x) = -\mathcal{L}w(s,x),
\qquad v(0,x)=0.
\]
Define the augmented state, the embedded source term, and the block operator by
\begin{equation}\label{eq:def psi-wv}
	\bm{\psi}(s,x) = [\,w(s,x)\;\, v(s,x)\,]^{\top},
	\qquad
	\widetilde{\bm{f}}(x) = [\,f(x)\;\, 0\,]^{\top},
	\qquad
	\widetilde{\mathcal{L}}
	=
	\begin{bmatrix}
		0 & \mathcal{L}^{\dagger} \\
		-\mathcal{L} & 0
	\end{bmatrix}.
\end{equation}
A direct calculation shows that $\bm{\psi}$ satisfies the inhomogeneous
first-order system
\begin{equation}\label{eq:p}
	\frac{\mathrm{d}}{\mathrm{d}s}\bm{\psi}(s,x)
	=
	\widetilde{\mathcal{L}}\,\bm{\psi}(s,x)
	+ s\,\widetilde{\bm{f}}(x),
	\qquad
	\bm{\psi}(0,x)=\bm{\psi}_0(x),
\end{equation}
where $\bm{\psi}_0(x)=[\,u_0(x),\,0\,]^{\top}$.
By Duhamel's principle, \eqref{eq:p} admits the variation-of-constants formula
\begin{equation}\label{eq:unitary dynamics}
	\bm{\psi}(s,x)
	= \mathcal{U}(s)\,\bm{\psi}_0(x)
	+ \int_0^{s}\sigma\,\mathcal{U}(s-\sigma)\,\widetilde{\bm{f}}(x)\,
	\mathrm{d}\sigma,
	\qquad
	\mathcal{U}(s)=\exp\bigl(\widetilde{\mathcal{L}}s\bigr).
\end{equation}
The operator $\widetilde{\mathcal{L}}$ is skew-adjoint under the boundary
conditions inherited from $\mathcal{L}$. Consequently, the homogeneous
evolution $\mathcal{U}(s)$ is unitary. This unitary structure will be central
in our quantum constructions.

The next lemma makes explicit that the transmutation formula expresses $u$
as an integral superposition of the unitary dynamics generated by
$\widetilde{\mathcal{L}}$.

\begin{lemma}\label{lem:transmutation-inhom-damped-wv}
	Let $\mathcal{A}=\mathcal{L}^{\dagger}\mathcal{L}$ be a nonnegative self-adjoint
	operator independent of time.
	Assume that $u$ and $w$ satisfy \eqref{eq:diffu u} and \eqref{eq:w-wave-A-LL},
	respectively.
	Let $\mathcal{H}$ be the transmutation operator in \eqref{eq:Hw} with kernel
	\eqref{eq:Gaussian kernel}.
	Then
	\begin{equation}\label{eq:u-from-w-short}
		u(t,x)
		=
		\Pi_1 \Bigg(
		\int_{\mathbb{R}} K(t,s)\,\mathcal{U}(s)\,\bm{\psi}_0(x)\,\mathrm{d}s
		+
		\int_{\mathbb{R}} K(t,s)\,
		\int_0^{s} \sigma\,\mathcal{U}(s-\sigma)\,\widetilde{\bm{f}}(x)\,
		\mathrm{d}\sigma\,\mathrm{d}s
		\Bigg),
	\end{equation}
	where $\Pi_1$ denotes the projection onto the first component, $\bm{\psi}_0$ and $\widetilde{\bm{f}}$ are defined in \eqref{eq:def psi-wv}, and $\mathcal{U}(s)$ is given by \eqref{eq:unitary dynamics}.
\end{lemma}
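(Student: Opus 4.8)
The plan is to obtain \eqref{eq:u-from-w-short} by chaining together three facts already in place: the Kannai representation \eqref{eq:Hw}, the equivalence between the second-order wave equation \eqref{eq:w-wave-A-LL} and the first-order Hamiltonian system \eqref{eq:p}, and the Duhamel formula \eqref{eq:unitary dynamics}. In essence the lemma is a substitution, so most of the work is bookkeeping rather than new estimates.

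First I would verify that the $w$ appearing in \eqref{eq:Hw} is exactly the first component $\Pi_1\bm{\psi}$ of the solution of \eqref{eq:p}. Starting from a solution $w$ of \eqref{eq:w-wave-A-LL}, set $v(s,x)=-\int_0^s\mathcal{L}w(\sigma,x)\,\mathrm{d}\sigma$, so that $v(0,x)=0$ and $v_s=-\mathcal{L}w$; then, using $\mathcal{A}=\mathcal{L}^\dagger\mathcal{L}$, one checks $\frac{\mathrm{d}}{\mathrm{d}s}(w_s-\mathcal{L}^\dagger v)=w_{ss}-\mathcal{L}^\dagger v_s=(-\mathcal{A}w+f)+\mathcal{L}^\dagger\mathcal{L}w=f$, which together with $w_s(0,x)=0$ and $v(0,x)=0$ integrates to $w_s=\mathcal{L}^\dagger v+sf$. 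Hence $\bm{\psi}=[\,w,\,v\,]^\top$ solves \eqref{eq:p} with $\bm{\psi}(0,x)=\bm{\psi}_0(x)$, and since $\widetilde{\mathcal{L}}$ is skew-adjoint this Cauchy problem has a unique solution, namely \eqref{eq:unitary dynamics}. Substituting the resulting expression for $w=\Pi_1\bm{\psi}$ into \eqref{eq:Hw} and pulling the bounded, $s$-independent projection $\Pi_1$ through the outer integral by linearity gives \eqref{eq:u-from-w-short} directly.

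The convergence and interchange steps I would then justify via unitarity: since $\|\mathcal{U}(s)\|=1$, the first integrand is dominated by $\|\bm{\psi}_0(x)\|\,K(t,s)$ and the second by $\tfrac{1}{2}s^2\|\widetilde{\bm{f}}(x)\|\,K(t,s)$, both absolutely integrable against the Gaussian kernel \eqref{eq:Gaussian kernel}, so Fubini and commuting $\Pi_1$ with the integral are legitimate. The only genuinely delicate point — and the step I expect to need the most care — is the functional-analytic setup: pinning down the domains on which $\mathcal{L}$, $\mathcal{L}^\dagger$, and $\mathcal{A}$ act with the inherited boundary conditions, so that $w$ has enough regularity for the equivalence \eqref{eq:w-wave-A-LL}$\Leftrightarrow$\eqref{eq:p} to hold and for Kannai's representation \eqref{eq:Hw} to apply in the first place. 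In the spatially discretized setting used throughout the paper all of these operators are bounded matrices and this subtlety disappears, so I would state the continuous version under the standing hypothesis that $\mathcal{A}=\mathcal{L}^\dagger\mathcal{L}$ is nonnegative self-adjoint with the domain inherited from $\mathcal{L}$, cite \cite{Kannai77} for the validity of \eqref{eq:Hw}, and leave the remainder as the one-line substitution above.
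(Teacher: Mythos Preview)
Your proposal is correct and matches the paper's approach: the paper gives no separate proof of this lemma, instead presenting it as the immediate consequence of the Kannai representation \eqref{eq:Hw}, the first-order rewriting \eqref{eq:p}, and the Duhamel formula \eqref{eq:unitary dynamics} that were just set up. Your explicit verification that $\bm{\psi}=[w,v]^\top$ solves \eqref{eq:p} and your dominated-convergence justification are exactly the bookkeeping details the paper leaves implicit.
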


\section{Numerical discretization}\label{sec:numerical_dis}
In this section we construct a fully discrete formulation of the proposed
representation. We discretize the first-order system in space to obtain a semi-discrete unitary evolution, then approximate the resulting Gaussian integrals by quadrature and derive error bounds.

\subsection{Spatial discretization of the first-order system}
\label{subsec:Spatial discretization of the first-order system}
To discretize \eqref{eq:p} in space, we introduce a matrix
$L\in\mathbb{C}^{N\times N}$ arising from an appropriate spatial discretization
of the differential operator $\mathcal{L}$.
Let $\bm{w}_h(s),\bm{v}_h(s)\in\mathbb{C}^{N}$ denote the semi-discrete
approximations of $w(s,\cdot)$ and $v(s,\cdot)$, and let
$\bm{f}_h\in\mathbb{C}^{N}$ be the discrete source term.
Define the stacked state vector and forcing vector by
\begin{equation}\label{eq:psi_b}
	\bm{\psi}_h(s)=
	\begin{bmatrix}
		\bm{w}_h(s)\\[1mm]\bm{v}_h(s)
	\end{bmatrix}\in\mathbb{C}^{2N},
	\qquad
	\bm{\psi}_0=
	\begin{bmatrix}
		\bm{u}_0\\ \bm{0}
	\end{bmatrix}\in\mathbb{C}^{2N},
	\qquad
	\bm{b}=
	\begin{bmatrix}
		\bm{f}_h\\ \bm{0}
	\end{bmatrix}\in\mathbb{C}^{2N},
\end{equation}
where $\bm{u}_0$ is the discrete initial data. The semi-discrete system can
be written as
\begin{equation}\label{eq:discrete-system}
	\frac{\mathrm d}{\mathrm ds}\bm{\psi}_h(s)
	= \widetilde{L}\bm{\psi}_h(s) + s\bm{b}
	=: -iH\,\bm{\psi}_h(s)+s\bm{b},
\end{equation}
with the anti-Hermitian matrix $\widetilde{L}$ and its Hermitian counterpart $H$
given by
\begin{equation}\label{eq:H-Sch-short}
	\widetilde{L}= \ket{0}\bra{1}\otimes L^{\dagger} - \ket{1}\bra{0}\otimes L
	\in\mathbb{C}^{2N\times 2N}, 
	\qquad
	H=i\widetilde{L}.
\end{equation}
Hence the homogeneous propagator $U(s):=\exp(-iHs)$ is unitary. By Duhamel's
principle, the solution of \eqref{eq:discrete-system} admits the representation
\begin{equation}\label{eq:psi-duhamel}
	\bm{\psi}_h(s)
	=
	U(s)\bm{\psi}_0
	+\left(\int_0^s (s-\sigma)\,U(\sigma)\,\mathrm d\sigma\right)\bm{b}.
\end{equation}

Based on \eqref{eq:psi-duhamel}, we define the discrete target at time
$T>0$ by the Kannai Gaussian integral
\begin{align}
	\bm{u}_f^{\mathrm{disc}}(T)
	&:=\int_{\mathbb{R}} K(T,s)
	\,\bm{\psi}_h(s)\,\mathrm ds \notag  \\ 
	&=
	\big(\int_{\mathbb{R}}\kappa_T(s)\,U(s)\,\mathrm ds\big)\bm{\psi}_0
	+\big(\int_{\mathbb{R}}\kappa_T(s)
	\int_0^{s} (s-\sigma)\,U(\sigma)\,\mathrm d\sigma\,\mathrm ds\big)\bm{b}
	\label{eq:I1I2}
\end{align}
with
\begin{equation}\label{eq:kappa}
	\kappa_T(s)=K(T,s) =(4\pi T)^{-1/2}e^{-s^2/(4T)}.
\end{equation}
We also introduce the Gaussian tail function
\[
\Phi_T(a)
:=\frac{1}{\sqrt{4\pi T}}\int_a^\infty e^{-s^2/(4T)}\,\mathrm ds
=\tfrac12\,\mathrm{erfc}\!\left(\frac{a}{2\sqrt{T}}\right).
\]
Since $U(s)$ is unitary and $\kappa_T$ is integrable on $\mathbb{R}$, all
integrals above are absolutely convergent, and Fubini's theorem applies.
Exchanging the order of integration gives the one-dimensional representation. 
More precisely, one obtains 
\begin{equation}\label{eq:vf_disc}
	\bm{u}_f^{\mathrm{disc}}(T)
	=
	\left(\int_{\mathbb{R}} \kappa_T(s)\,U(s)\,\mathrm ds\right)\bm{\psi}_0
	+\left(\int_{\mathbb{R}} \Lambda_T(\sigma)\,U(\sigma)\,\mathrm d\sigma\right)\bm{b},
\end{equation}
where the scalar kernel $\Lambda_T$ is given by
	\begin{equation}\label{eq:Lamda}
		\Lambda_T(\sigma)=
		\begin{cases}
			\displaystyle
			\frac{1}{\sqrt{4\pi T}}\int_{\sigma}^{\infty} e^{-s^2/(4T)}(s-\sigma)\,\mathrm ds
			= \sqrt{\frac{T}{\pi}}\,e^{-\sigma^2/(4T)}-\sigma\,\Phi_T(\sigma),
			& \sigma\ge 0,\\[10pt]
			\displaystyle
			\frac{1}{\sqrt{4\pi T}}\int_{-\infty}^{\sigma} e^{-s^2/(4T)}(\sigma-s)\,\mathrm ds
			= \sqrt{\frac{T}{\pi}}\,e^{-\sigma^2/(4T)}+\sigma\,\Phi_T(-\sigma),
			& \sigma<0.
		\end{cases}
	\end{equation}
	It follows directly from this formula that \(\Lambda_T\) is nonnegative and even.  Moreover,
	$ \int_{\bbR}\Lambda_T(\sigma)\,d\sigma =T$.

\subsection{Numerical integration}
For numerical implementation, we truncate the infinite integration domain to a
finite interval $[-R,R]$ with $R>0$. We partition $[-R,R]$ into uniform panels
of length $h_1>0$. Let $M_R:=R/h_1$ and assume that $M_R$ is an integer. The panels
are indexed by $m=-M_R,\dots,M_R-1$ and given by
$[mh_1,(m+1)h_1]$.
On each panel we apply a $Q$-point Gauss--Legendre rule. Let
$\{x_q,\omega_q\}_{q=1}^{Q}$ be the standard Gauss--Legendre nodes and weights
on $[-1,1]$. The corresponding nodes and rescaled weights on
$[mh_1,(m+1)h_1]$ are
\begin{equation}\label{eq:weight_and_points}
	s_{q,m}=\frac{h_1}{2}x_q+\frac{2m+1}{2}h_1,
	\qquad
	w_{q,m}=\frac{h_1}{2}\omega_q,
	\qquad q=1,\dots,Q.
\end{equation}

We assume that the kernel values can be precomputed offline to high precision.
More precisely, we use approximations $\kappa_T^a$ and $\Lambda_T^a$ satisfying
\begin{equation}\label{eq:kappa_Lambda_approxi} 
	|\kappa_T^a(s)-\kappa_T(s)|\le \delta_{\mathrm{off}}\,|\kappa_T(s)|,\quad 
	|\Lambda_T^a(s)-\Lambda_T(s)|\le \delta_{\mathrm{off}}\,|\Lambda_T(s)|,
	\quad \forall s\in\mathbb{R}.
\end{equation}
With these ingredients, the quadrature approximations of the two integrals in
\eqref{eq:vf_disc} lead to the discrete representation
\begin{equation}\label{eq:uf_h}
	\bm{u}_f^{\mathrm{disc}}(T)\approx \bm{u}_f^{h}(T)
	=
	\sum_{m=-M_R}^{M_R-1}\sum_{q=1}^{Q}
	\Bigl(
	c_{q,m}\,U(s_{q,m})\,\bm{\psi}_0
	+d_{q,m}\,U(s_{q,m})\,\bm{b}
	\Bigr),
\end{equation}
where
\begin{equation}\label{eq:coeffs}
	c_{q,m}=w_{q,m}\,\kappa_T^a(s_{q,m}),
	\qquad
	d_{q,m}=w_{q,m}\,\Lambda_T^a(s_{q,m}).
\end{equation}

\begin{remark}
	The offline precomputation of $\kappa_T$ and $\Lambda_T$ removes the need to
	evaluate nested integrals during the online stage. Both contributions in
	\eqref{eq:vf_disc} reduce to one-dimensional weighted sums over quadrature
	nodes. Alternative quadrature rules such as the trapezoidal rule or Monte Carlo
	sampling can be incorporated analogously. The choice of quadrature affects the
	required coefficient precision but does not change the final query complexity
	of the quantum procedure, since coefficient precision enters only
	polylogarithmically in the relevant qubit  complexity.
\end{remark}

\subsection{Error analysis for the fully discrete quadrature}

We estimate the error of the fully discrete quadrature scheme approximating
$\bm u_f^{\mathrm{disc}}(T)$ by $\bm u_f^{h}(T)$:
\[
\mathcal E
= \|\bm u_f^{h}(T)-\bm u_f^{\mathrm{disc}}(T)\|.
\]
The error consists of three contributions:
(i) truncation of the integral to $[-R,R]$,
(ii) Gauss--Legendre quadrature discretization on each panel,
and (iii) offline approximation of the kernel functions.

The following theorem provides a complete error bound together with explicit
parameter choices.

\begin{theorem}\label{thm:error-fully-discrete-finish}
	Assume that the offline kernels $\kappa_T^a$ and $\Lambda_T^a$ satisfy the relative
	accuracy condition \eqref{eq:kappa_Lambda_approxi}.
	Let $\bm u_f^{h}(T)$ denote the truncated quadrature approximation
	\eqref{eq:uf_h} and $\bm u_f^{\mathrm{disc}}(T)$ the target representation
	\eqref{eq:vf_disc}.
	Then, for any $\epsilon\in(0,1)$, there exist parameters
	$R>0$, $Q\ge1$, $h_1>0$, and $\delta_{\mathrm{off}}>0$ such that
	\[
	\|\bm u_f^{h}(T)-\bm u_f^{\mathrm{disc}}(T)\|
	\le
	\epsilon\bigl(\|\bm\psi_0\|+T\|\bm b\|\bigr).
	\]
	Moreover, the total coefficient magnitudes satisfy
	\[
	\sum_{m,q}|c_{q,m}|\le 1+\epsilon,
	\qquad
	\sum_{m,q}|d_{q,m}|\le (1+\epsilon)T+\epsilon .
	\]
	
	In particular, the above bounds hold under the explicit choices
	\[
	R
	=2\sqrt{T\log\!\Big(\frac{8}{\epsilon}\Big)}
	=O\!\big(\sqrt{T\log(1/\epsilon)}\big),
	\]
	and
	\[
	Q=O\big(\log(1/\epsilon)\big),
	\qquad
	h_1\le
	\frac{1}{e\bigl(\|L\|+1/\sqrt{2T}\bigr)},
	\qquad
	\delta_{\mathrm{off}}\le\epsilon/4 .
	\]
\end{theorem}

\begin{proof}
	
	{\color{black}
		
		We decompose the error into truncation, quadrature, and offline approximation
		errors. Define the exact and truncated integral operators by
		\[
		\begin{aligned}
			\mathcal K_T
			&:=\int_{\mathbb R}\kappa_T(s)U(s)\,\mathrm ds,
			&
			\mathcal M_T
			&:=\int_{\mathbb R}\Lambda_T(s)U(s)\,\mathrm ds,\\
			\mathcal K_{T,R}
			&:=\int_{-R}^{R}\kappa_T(s)U(s)\,\mathrm ds,
			&
			\mathcal M_{T,R}
			&:=\int_{-R}^{R}\Lambda_T(s)U(s)\,\mathrm ds .
		\end{aligned}
		\]
		The corresponding composite Gauss--Legendre approximations with exact kernels are
		\[
		\mathcal K_{T,R}^{Q}
		:=\hspace{-4mm}\sum_{\hspace{-1mm}m=-M_R}^{M_R-1}\sum_{q=1}^{Q}
		w_{q,m}\kappa_T(s_{q,m})U(s_{q,m}),\quad
		\mathcal M_{T,R}^{Q}
		:=\hspace{-4mm}\sum_{\hspace{-1mm}m=-M_R}^{M_R-1}\sum_{q=1}^{Q}
		w_{q,m}\Lambda_T(s_{q,m})U(s_{q,m}),
		\]
		and the offline-kernel versions are
		\[
		\mathcal K_{T,R}^{Q,a}
		:=\hspace{-4mm}\sum_{\hspace{-1mm}m=-M_R}^{M_R-1}\sum_{q=1}^{Q}
		w_{q,m}\kappa_T^a(s_{q,m})U(s_{q,m}),\quad
		\mathcal M_{T,R}^{Q,a}
		:=\hspace{-4mm}\sum_{m=-M_R}^{M_R-1}\sum_{q=1}^{Q}
		w_{q,m}\Lambda_T^a(s_{q,m})U(s_{q,m}).
		\]
		Then
		\[
		\bm u_f^{h}(T)-\bm u_f^{\mathrm{disc}}(T)
		=
		(\mathcal K_{T,R}^{Q,a}-\mathcal K_T)\bm\psi_0
		+
		(\mathcal M_{T,R}^{Q,a}-\mathcal M_T)\bm b .
		\]
		It remains to bound the two operator differences
		\(\mathcal K_{T,R}^{Q,a}-\mathcal K_T\) and
		\(\mathcal M_{T,R}^{Q,a}-\mathcal M_T\). For instance,
		\[
		\|\mathcal K_{T,R}^{Q,a}-\mathcal K_T\|
		\le
		\|\mathcal K_T-\mathcal K_{T,R}\|
		+
		\|\mathcal K_{T,R}-\mathcal K_{T,R}^{Q}\|
		+
		\|\mathcal K_{T,R}^{Q}-\mathcal K_{T,R}^{Q,a}\|,
		\]
		and the same decomposition holds for the \(\mathcal M\)-term.
		
		\paragraph{Truncation}
		Since \(U(s)\) is unitary, the homogeneous truncation error is bounded by the
		Gaussian tail. With \(x=R/(2\sqrt T)\),
		\[
		\|\mathcal K_T-\mathcal K_{T,R}\|
		\le
		\int_{|s|>R}\kappa_T(s)\,\mathrm ds
		=
		\operatorname{erfc}(x)
		\le
		\frac{e^{-x^2}}{\sqrt{\pi}x}.
		\]
		For the inhomogeneous kernel, formula \eqref{eq:Lamda} gives 
		\(\Lambda_T\ge0\), \(\Lambda_T\) even, and
		$
		\int_{\mathbb R}\Lambda_T(\sigma)\,\mathrm d\sigma=T.
		$
		Moreover, by Fubini's theorem,
		\[
		\int_{|\sigma|>R}\Lambda_T(\sigma)\,\mathrm d\sigma
		=
		\int_R^\infty \kappa_T(s)(s-R)^2\,\mathrm ds
		\le
		\frac{T}{\sqrt{\pi}x^3}e^{-x^2}.
		\]
		Thus, for
		\[
		R=2\sqrt{T\log(8/\epsilon)},
		\]
		up to changing the universal numerical constant in the logarithm, the truncation
		contribution is bounded by
		\[
		O(\epsilon)\bigl(\|\bm\psi_0\|+T\|\bm b\|\bigr).
		\]
		
		\paragraph{Quadrature}
		Let
		\[
		B:=\|H\|+\frac1{\sqrt{2T}}
		=
		\|L\|+\frac1{\sqrt{2T}}.
		\]
		On each panel, \(\kappa_T(s)U(s)\) is analytic. The kernel \(\Lambda_T\) is
		analytic on the two half-lines, and the point \(0\) is a panel endpoint, so the
		same composite Gauss--Legendre estimate applies branchwise. The standard
		Gauss--Legendre remainder estimate, together with
		\(\|\partial_s^rU(s)\|\le \|H\|^r\) and the Gaussian derivative bounds, yields
		\[
		\|\mathcal K_{T,R}-\mathcal K_{T,R}^{Q}\|
		\le
		Ce^{-cQ},
		\qquad
		\|\mathcal M_{T,R}-\mathcal M_{T,R}^{Q}\|
		\le
		CTe^{-cQ},
		\]
		provided
		\[
		h_1\le\frac{1}{eB}
		=
		\frac{1}{e(\|L\|+1/\sqrt{2T})}.
		\]
		Here \(C,c>0\) are universal constants independent of \(T\), \(\|L\|\), and
		\(\epsilon\). Choosing
		\[
		Q=C_0\log(1/\epsilon)
		\]
		with \(C_0\) sufficiently large makes the quadrature contribution
		\[
		O(\epsilon)\bigl(\|\bm\psi_0\|+T\|\bm b\|\bigr).
		\]
		
		\paragraph{Offline kernel approximation}
		By the relative accuracy assumption \eqref{eq:kappa_Lambda_approxi} and the
		unitarity of \(U(s)\),
		\[
		\|\mathcal K_{T,R}^{Q}-\mathcal K_{T,R}^{Q,a}\|
		\le
		\delta_{\mathrm{off}}
		\sum_{m,q}w_{q,m}\kappa_T(s_{q,m}),
		\quad 
		\|\mathcal M_{T,R}^{Q}-\mathcal M_{T,R}^{Q,a}\|
		\le
		\delta_{\mathrm{off}}
		\sum_{m,q}w_{q,m}\Lambda_T(s_{q,m}).
		\]
		The scalar versions of the preceding truncation and quadrature estimates imply
		\[
		\sum_{m,q}w_{q,m}\kappa_T(s_{q,m})\le 1+C\epsilon,
		\qquad
		\sum_{m,q}w_{q,m}\Lambda_T(s_{q,m})\le T(1+C\epsilon)+C\epsilon .
		\]
		Hence, with \(\delta_{\mathrm{off}}\le\epsilon/4\), after absorbing universal
		constants into the parameter choices, the offline contribution is also bounded
		by
		\[
		O(\epsilon)\bigl(\|\bm\psi_0\|+T\|\bm b\|\bigr).
		\]
		Combining the three estimates gives
		\[
		\|\bm u_f^{h}(T)-\bm u_f^{\mathrm{disc}}(T)\|
		\le
		\epsilon\bigl(\|\bm\psi_0\|+T\|\bm b\|\bigr).
		\]
		
		Finally, we bound the coefficient magnitudes. Since the quadrature weights are
		positive and \(\kappa_T,\Lambda_T\ge0\),
		\[
		\sum_{m,q}|c_{q,m}|
		\le
		(1+\delta_{\mathrm{off}})
		\sum_{m,q}w_{q,m}\kappa_T(s_{q,m})
		\le
		1+\epsilon.
		\]
		Similarly,
		\[
		\sum_{m,q}|d_{q,m}|
		\le
		(1+\delta_{\mathrm{off}})
		\sum_{m,q}w_{q,m}\Lambda_T(s_{q,m})
		\le
		(1+\epsilon)T+\epsilon.
		\]
		This proves the stated error and coefficient bounds.
	}
\end{proof}

\section{Quantum implementation and query complexity}\label{sec:quantum-implementation}
In this section we present the quantum implementation of our method for the
inhomogeneous system in \eqref{eq:diffu u} and derive the resulting query
complexity bounds.

For the quantum implementation we flatten the double index $(q,m)$ into a single
index $j\in\{0,\dots,M-1\}$, where $M=2M_RQ$. We write
\[
c_j := c_{q,m},
\qquad
d_j := d_{q,m},
\qquad
s_j := s_{q,m}.
\]
Since the same quadrature nodes are used for both sums, the discrete
representation \eqref{eq:uf_h} becomes
\begin{equation}\label{eq:uf-disc-flat-short}
	\bm{u}_f^h
	=
	\sum_{j=0}^{M-1}
	\bigl(
	c_j\,U(s_j)\,\bm{\psi}_0
	+
	d_j\,U(s_j)\,\bm{b}
	\bigr),
\end{equation}
where $U(s)=\exp(-iHs)$ and $H$ is given in \eqref{eq:H-Sch-short}.

\subsection{Block-encoding of the Hamiltonians}
We work in the standard sparse-query/block-encoding model \cite{gilyen2019quantum,CGJ19,Lin2022Notes}.
Assume access to an $(\alpha_L,a_L,0)$-block-encoding of $L$, namely a unitary $U_L$ such that
\begin{equation}\label{eq:L-block-short}
	(\bra{0^{a_L}}\otimes I_N)\,U_L\,(\ket{0^{a_L}}\otimes I_N)=\frac{L}{\alpha_L},
	\qquad \alpha_L\ge \|L\|.
\end{equation}
Then $U_L^\dagger$ block-encodes $L^\dagger$ with the same parameters.
We emphasize that access to a factor $L$ is natural for standard staggered or mixed discretizations, where $L$ corresponds to a discrete gradient/flux operator, and is not an additional assumption beyond assembling the PDE operator  $A = L^{\dagger} L$.

Define the Hermitian dilation
\[
\widetilde{H}_0:=\ket{0}\!\bra{1}\otimes L^\dagger+\ket{1}\!\bra{0}\otimes L .
\]
Using one additional ``position'' qubit and $U_L,U_L^\dagger$, one can construct a unitary $W_{H_0}$
acting on $a_H:=a_L+1$ ancillas and the $2N$-dimensional signal space such that
\begin{equation}\label{eq:H0-block-final-short}
	(\bra{0^{a_H}}\otimes I_{2N})\,W_{H_0}\,(\ket{0^{a_H}}\otimes I_{2N})
	=\frac{\widetilde{H}_0}{\alpha_L}.
\end{equation}
With $P:=\ket{0}\!\bra{0}-i\ket{1}\!\bra{1}$, one has $H=(P\otimes I_N)\widetilde{H}_0(P^\dagger\otimes I_N)$.
Define
\begin{equation}\label{eq:UH-def-correct}
	\mathrm{HAM}_H:=(I_{a_H}\otimes P\otimes I_N)\,W_{H_0}\,(I_{a_H}\otimes P^\dagger\otimes I_N).
\end{equation}
Then $\mathrm{HAM}_H$ is an $(\alpha_H,a_H,0)$-block-encoding of $H$ with $\alpha_H=\Theta(\alpha_L)$,
and each call to $\mathrm{HAM}_H$ uses $O(1)$ queries to $U_L$ and $U_L^\dagger$.

Let $s_{\max}:=\max_j |s_j|\le R$ and set $\gamma_j:=s_j/s_{\max}\in[-1,1]$.
Using reversible arithmetic, we implement scalar-preparation oracles $O_{\gamma,\mathrm{L}}$ and
$O_{\gamma,\mathrm{R}}$ such that, conditioned on $\ket{j}$, they prepare amplitudes and phases for $\gamma_j$:
\begin{equation}\label{eq:gamma-oracle}
	\begin{aligned}
		O_{\gamma,\mathrm{L}}:\ \ket{j}\ket{0}_\gamma &\mapsto
		\ket{j}\!\left(\sqrt{|\gamma_j|}\ket{0}_\gamma+\sqrt{1-|\gamma_j|}\ket{1}_\gamma\right),\\
		O_{\gamma,\mathrm{R}}:\ \ket{j}\ket{0}_\gamma &\mapsto
		\ket{j}\!\left(e^{i\theta_j^{(\gamma)}}\sqrt{|\gamma_j|}\ket{0}_\gamma+\sqrt{1-|\gamma_j|}\ket{1}_\gamma\right).
	\end{aligned}
\end{equation}
The resulting overhead is polylogarithmic in the target precision.
With these, define the unitary $\mathrm{HAM}_{H_s}$ by
\begin{equation}\label{eq:U-scalar-def}
	\begin{aligned}
		\mathrm{HAM}_{H_s}
		:={}&\, I_M \otimes \Bigl(
		\bigl(O_{\gamma,\mathrm{L}}^\dagger \otimes I_{a_H}\otimes I_{2N}\bigr) \\
		&\cdot
		\bigl(
		\ket{0}\!\bra{0}_\gamma\otimes \mathrm{HAM}_H
		+
		\ket{1}\!\bra{1}_\gamma\otimes I_{a_H}\otimes I_{2N}
		\bigr)\cdot
		\bigl(O_{\gamma,\mathrm{R}}\otimes I_{a_H}\otimes I_{2N}\bigr) \Bigr).
	\end{aligned}
\end{equation}
Then, on the joint index--signal space,
\begin{equation}\label{eq:Hj-block-encoding-compact}
	(I_M\!\otimes\!\bra{0}_\gamma\bra{0^{a_H}}\!\otimes\! I_{2N})\,
	\mathrm{HAM}_{H_s}\,
	(I_M\!\otimes\!\ket{0}_\gamma\ket{0^{a_H}}\!\otimes\! I_{2N})
	=\sum_{j=0}^{M-1}\ket{j}\!\bra{j}\otimes \frac{s_j H}{\alpha_H s_{\max}} .
\end{equation}
Hence $\mathrm{HAM}_{H_s}$ is an $(\alpha_H s_{\max},a_H\!+\!1,0)$-block-encoding of the block-diagonal Hamiltonian
\begin{equation}\label{eq:K}
	H_S:=\sum_{j=0}^{M-1} s_j\,\ket{j}\!\bra{j}\otimes H,
\end{equation}
where the index register is treated as part of the signal space and not counted as ancillas.

\subsection{Implementation of  unitary time evolution}
Using the above $(\alpha_H s_{\max},a_H+1,0)$-block-encoding of $H_S$, we aim to implement
\[
e^{-iH_S}=\sum_{j=0}^{M-1}\ket{j}\!\bra{j}\otimes e^{-iHs_j}=\sum_{j=0}^{M-1}\ket{j}\!\bra{j}\otimes U(s_j).
\]
Applying a QSVT-based Hamiltonian simulation routine to the normalized block-encoding
$H_S/(\alpha_H s_{\max})$ \cite{Low2019Interaction,ACL2023LCH2}, we obtain a unitary $\mathrm{SEL}$
acting on an additional ancilla register of size $a_{\mathrm{QSVT}}$ such that
\begin{equation}\label{VaErr}
	(I_M\otimes \bra{0^{a_{\mathrm{tot}}}}\otimes I_{2N})\,\mathrm{SEL}\,
	(I_M\otimes \ket{0^{a_{\mathrm{tot}}}}\otimes I_{2N})
	=\sum_{j=0}^{M-1}\ket{j}\!\bra{j}\otimes U_j^a,
\end{equation}
where $a_{\mathrm{tot}}:=a_H+1+a_{\mathrm{QSVT}}$ and
\[
\|U_j^a-U(s_j)\|\le \delta_1 \qquad \text{for all } j\in\{0,\dots,M-1\}.
\]
By \cite[Corollary~16]{ACL2023LCH2}, constructing $\mathrm{SEL}$ uses the block-encoding oracle for $H_S$
a total of
\begin{equation}\label{eq:times-for-H}
	\widetilde{O}\!\Big(\alpha_H s_{\max}+\log(1/\delta_1)\Big)
\end{equation}
queries, up to polylogarithmic factors in the
relevant precision parameters.

\subsection{The LCU implementation of the discrete representation}
We assume access to state-preparation oracles
\[
O_{\psi}:\;\ket{\bm 0}\mapsto\ket{\bm{\psi}_0},
\qquad
O_{b}:\;\ket{\bm 0}\mapsto\ket{\bm{b}}.
\]
We also assume access to standard coefficient oracles
$O_{c,\mathrm{L}},O_{c,\mathrm{R}}$ and $O_{d,\mathrm{L}},O_{d,\mathrm{R}}$ that,
conditioned on $j$, prepare amplitudes proportional to $|c_j|$ and $|d_j|$ and
encode the phases $e^{i\theta_j^{(c)}}$ and $e^{i\theta_j^{(d)}}$ required by the
linear combination of unitary (LCU) framework \cite{CKS17}.

Write $c_j=|c_j|e^{i\theta_j^{(c)}}$ and $d_j=|d_j|e^{i\theta_j^{(d)}}$, and set
\begin{equation}\label{eq:alpha_cd}
	\alpha_c := \sum_{j=0}^{M-1} |c_{j}|,
	\qquad
	\alpha_d := \sum_{j=0}^{M-1} |d_{j}|.
\end{equation}
Using \eqref{VaErr}, the homogeneous contribution in \eqref{eq:uf-disc-flat-short}
admits the LCU block-encoding
\begin{equation}\label{eq:block-hom-short}
	\mathcal{W}_{\mathrm{hom}}^a
	=
	(O_{c,\mathrm{L}}^\dagger\otimes I)\,
	\mathrm{SEL}\,
	(O_{c,\mathrm{R}}\otimes O_{\psi}),
	\qquad
	(\bra{0}\otimes I)\,
	\mathcal{W}_{\mathrm{hom}}^a\,
	(\ket{0}\otimes I)
	=
	\frac{1}{\alpha_c}
	\sum_{j=0}^{M-1}
	c_j\,U_j^a.
\end{equation}
The inhomogeneous contribution is realized analogously by
\begin{equation}\label{eq:block-inh-short}
	\mathcal{W}_{\mathrm{inh}}^a
	=
	(O_{d,\mathrm{L}}^\dagger\otimes I)\,
	\mathrm{SEL}\,
	(O_{d,\mathrm{R}}\otimes O_{b}),
	\qquad
	(\bra{0}\otimes I)\,
	\mathcal{W}_{\mathrm{inh}}^a\,
	(\ket{0}\otimes I)
	=
	\frac{1}{\alpha_d}
	\sum_{j=0}^{M-1}
	d_j\,U_j^a.
\end{equation}

Summarizing, there exists an integer $n_a$ and a unitary $V_0$ acting on the index register,
$n_a$ ancilla qubits, the position qubit $p$, and the $N$-dimensional data register such that
\begin{equation}\label{eq:V0-summary}
	\ket{0^{\otimes r}}_{\mathrm{idx}}
	\ket{0^{\otimes n_a}}_{\mathrm{anc}}
	\ket{0}_{p}
	\ket{\bm 0}_{\mathrm{data}}
	\xrightarrow{\,V_0\,}
	\frac{1}{\eta_0}\,
	\ket{0^{\otimes r}}_{\mathrm{idx}}
	\ket{0^{\otimes n_a}}_{\mathrm{anc}}
	\otimes \bm{u}_f^{a}
	+\ket{\perp}.
\end{equation}
For brevity we omit additional workspace qubits used by reversible arithmetic, which are uncomputed at the end.
Here $\ket{\perp}$ denotes a state whose index--ancilla component is orthogonal to
$\ket{0^{\otimes r}}_{\mathrm{idx}}\ket{0^{\otimes n_a}}_{\mathrm{anc}}$.
The vector $\bm{u}_f^a$ lives on the joint $p$--data register. With the embedding
$\bm{\psi}_0=\ket{0}_p\otimes \bm{u}_0$, it is given by
\begin{equation}\label{eq:uf_a}
	\bm{u}_f^a
	=
	\sum_{j=0}^{M-1}
	\bigl(
	c_j U_j^a \bm{\psi}_0
	+
	d_j U_j^a \bm{b}
	\bigr),
	\qquad
	\eta_0
	=
	\Theta\bigl(\alpha_c\|\bm{u}_0\|+\alpha_d\|\bm{b}\|\bigr).
\end{equation}

For clarity, we focus on the homogeneous case and present the overall quantum circuit that implements the corresponding LCU construction (see Fig.~\ref{fig:circuit}).
The circuit starts from
$\ket{0^{\otimes r}}_{\mathrm{idx}}
\ket{0^{\otimes n_a}}_{\mathrm{anc}}
\ket{0}_{p}
\ket{0}_{\mathrm{data}},$
where $r=\lceil \log_2 M\rceil$.
The data register is initialized by the state-preparation oracle $O_{\psi}$ so that
$ O_{\psi}\ket{\bm 0}_{\mathrm{data}}=\ket{\bm{u}_0}_{\mathrm{data}}, $
while the position qubit remains in the state $\ket{0}_{p}$. Equivalently, the
joint $p$--data register is initialized as
$\ket{\bm{\psi}_0} = \ket{0}_{p}\ket{\bm{u}_0}_{\mathrm{data}}. $
The ancilla register in Fig.~\ref{fig:circuit} represents the collection of auxiliary qubits used by the coefficient-preparation oracles and by \(\mathrm{SEL}\).

\begin{figure}[t]
	\centering
	\includegraphics[scale=0.9]{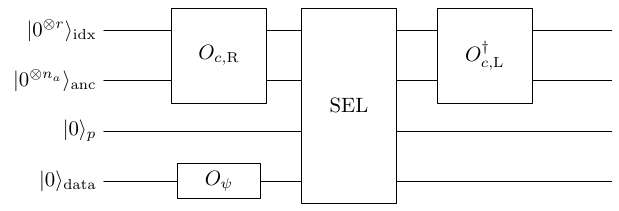}
	\caption{Quantum circuit for the homogeneous term.}
	\label{fig:circuit}
\end{figure}

{\color{black}
	\subsection{Query complexity}\label{subsec:complexity}
	
	Let \(\Pi_h:=\bra0_p\otimes I_N\). Since \(\bm u_f^a\) in
	\eqref{eq:V0-summary}--\eqref{eq:uf_a} already denotes the all-zero
	index--ancilla branch on the joint \(p\)-data register, the implemented
	physical branch is
	\(\bm u_h^a:=\Pi_h\bm u_f^a\). The target physical solution is
	\(\bm u_h(T):=\Pi_h\bm u_f^{\mathrm{disc}}(T)\), with
	\(\bm u_f^{\mathrm{disc}}(T)\) defined in \eqref{eq:vf_disc}.}

The circuit \(V_0\) prepares the unnormalized augmented vector \(\bm u_f^a\)
with prefactor \(1/\eta_0\). Thus the joint success probability for obtaining
the physical branch is \(\|\bm u_h^a\|^2/\eta_0^2\). In the proof below, the
implementation errors are chosen so that
\(\|\bm u_h^a-\bm u_h(T)\|=O(\varepsilon\|\bm u_h(T)\|)\), and hence
\(\|\bm u_h^a\|=\Theta(\|\bm u_h(T)\|)\). Applying oblivious amplitude
amplification, the repetition factor satisfies
\begin{equation}\label{eq:repeat-times-new}
	g
	=
	O\!\left(\frac{\eta_0}{\|\bm u_h^a\|}\right)
	=
	O\!\left(\frac{\eta_0}{\|\bm u_h(T)\|}\right)
	=
	O\!\left(
	\frac{\alpha_c\|\bm u_0\|+\alpha_d\|\bm b\|}
	{\|\bm u_h(T)\|}
	\right).
\end{equation}
Define
\begin{equation}\label{eq:vr-def-new}
	u_r
	:= 
	\frac{\|\bm{u}_0\|+ T\|\bm{b}\|}{\|\bm{u}_h(T)\|}.
\end{equation}
The resulting query complexity bounds are stated in Theorem~\ref{thm:complexity}.

\begin{theorem}\label{thm:complexity}
	Consider the linear equation in \eqref{eq:diffu u} with
	{\color{black} $\mathcal{A}=\mathcal{L}^{\dagger}\mathcal{L}$ }self-adjoint.
	Let $L$ be the matrix obtained from the spatial discretization of $\mathcal{L}$.
	There exists a quantum algorithm that prepares an $\varepsilon$-approximation of
	the normalized solution state $\ket{\bm{u}_h(T)}$ with $\Omega(1)$ success
	probability and a success flag.
	Moreover, the total query complexity to the block-encoding oracle for $L$ is
	\[
	\widetilde{O}\!\left(
	u_r\Bigl(
	\|L\|\sqrt{T\,\log\!\frac{u_r}{\varepsilon}}
	+
	\log\!\frac{u_r}{\varepsilon}
	\Bigr)
	\right),
	\]
	and the number of calls to the state-preparation oracles $O_{\psi}$ and $O_b$ is
	$O(u_r)$.
	
	The above guarantees hold under the following default choices of auxiliary parameters.
	
	\begin{center}
		\refstepcounter{table}\label{tab:disc-params}
		\small
		\renewcommand{\arraystretch}{1.05}
		\textbf{Table~\thetable.} Default discretization choices used in the proof.\\[0.2em]
		\begin{tabular}{lll}
			\hline
			Parameter & Meaning & Default choice  \\
			\hline
			$R$ & truncation radius in $s$ &
			$R = 2\sqrt{T\,\log\!\bigl(\dfrac{8u_r}{\varepsilon}\bigr)}$ \\
			$h_1$ & panel length in $s$ (step size) &
			$h_1 = \dfrac{1}{e\bigl(\|L\|+1/\sqrt{2T}\bigr)}$ \\
			$Q$ & reference Gauss points per panel &
			$Q = O\!\bigl(\log\!\bigl(\dfrac{8u_r}{\varepsilon}\bigr)\bigr)$ \\
			$\delta_{\mathrm{off}}$ & offline approximation error in \eqref{eq:kappa_Lambda_approxi} &
			$\delta_{\mathrm{off}} = \dfrac{\varepsilon}{32u_r}$ \\
			\hline
		\end{tabular}
	\end{center}
\end{theorem}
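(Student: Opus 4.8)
The plan is to assemble the claimed bound from three layers that have already been set up: the quadrature error control of Theorem~\ref{thm:error-fully-discrete-finish}, the Hamiltonian-simulation cost \eqref{eq:times-for-H}, and the LCU/amplitude-amplification overhead \eqref{eq:repeat-times-new}--\eqref{eq:vr-def-new}. First I would fix the target accuracy of the \emph{normalized} output state at $\varepsilon$ and trace back what this demands of the unnormalized construction. Since we ultimately want $\|\,\ket{\bm u_f^a}/\|\bm u_f^a\| - \ket{\bm u_h(T)}\,\|\le\varepsilon$ after projecting the position qubit on $\ket{0}$, and normalization amplifies absolute errors by a factor $\Theta(\eta_0/\|\bm u_h(T)\|)=\Theta(u_r)$ (using $\alpha_c\le 1+\epsilon$, $\alpha_d\le(1+\epsilon)T+\epsilon$ from Theorem~\ref{thm:error-fully-discrete-finish}, so $\eta_0=\Theta(\|\bm u_0\|+T\|\bm b\|)$), it suffices to guarantee an absolute error of order $\varepsilon\,\|\bm u_h(T)\|/u_r=\varepsilon(\|\bm u_0\|+T\|\bm b\|)/u_r^2$ at the level of $\bm u_f^a$. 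I would split this into (a) the quadrature/truncation error $\mathcal E\le \epsilon(\|\bm\psi_0\|+T\|\bm b\|)$ from Theorem~\ref{thm:error-fully-discrete-finish}, and (b) the Hamiltonian-simulation error: replacing each $U(s_j)$ by $U_j^a$ with $\|U_j^a-U(s_j)\|\le\delta_1$ perturbs $\bm u_f^a$ by at most $(\alpha_c\|\bm u_0\|+\alpha_d\|\bm b\|)\delta_1=\Theta((\|\bm u_0\|+T\|\bm b\|)\delta_1)$. Choosing $\epsilon=\Theta(\varepsilon/u_r^2)$ and $\delta_1=\Theta(\varepsilon/u_r^2)$ makes both contributions acceptable; note $\log(1/\epsilon)=\log(1/\delta_1)=\Theta(\log(u_r/\varepsilon))$, which is what appears inside the square root and the additive term of the stated complexity (the $u_r^2$ only costs an $O(1)$ factor inside the log, absorbed by $\widetilde{\mathcal O}$).

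Next I would collect the per-unit-cost. A single call to $\mathrm{SEL}$ costs, by \eqref{eq:times-for-H}, $\widetilde{\mathcal O}(\alpha_H s_{\max}+\log(1/\delta_1))$ queries to the block-encoding of $H_S$, hence of $H$; with $\alpha_H=\Theta(\alpha_L)=\Theta(\|L\|)$ (taking $\alpha_L=\Theta(\|L\|)$) and $s_{\max}\le R=\Theta(\sqrt{T\log(u_r/\varepsilon)})$ from \eqref{eq:R-choice-explicit}, this is $\widetilde{\mathcal O}\!\bigl(\|L\|\sqrt{T\log(u_r/\varepsilon)}+\log(u_r/\varepsilon)\bigr)$. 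The oracles $O_{c,\cdot},O_{d,\cdot},O_\psi,O_b$ are invoked $O(1)$ times per application of $V_0$. Then amplitude amplification multiplies the number of applications of $V_0$ by the repetition factor $g=\mathcal O(\eta_0/\|\bm u_h(T)\|)=\mathcal O(u_r)$ from \eqref{eq:repeat-times-new}; since each application of $V_0$ contains $O(1)$ uses of $\mathrm{SEL}$, the total query count to the $H$-block-encoding is $g$ times the per-$\mathrm{SEL}$ cost, i.e.
\[
\widetilde{\mathcal O}\!\left(u_r\Bigl(\|L\|\sqrt{T\log\tfrac{u_r}{\varepsilon}}+\log\tfrac{u_r}{\varepsilon}\Bigr)\right),
\]
while the state-preparation oracles are called $\mathcal O(u_r)$ times. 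I would also record that the success probability after amplification is $\Omega(1)$ by design of oblivious amplitude amplification, and that the position-qubit projection succeeds with probability $\|\bm u_h(T)\|^2/\|\bm u_f(T)\|^2$, which after conditioning on the LCU-success branch contributes only $O(1)$ additional rounds because $\|\bm u_f(T)\|\le \|\bm u_h(T)\|$-type norm comparisons are already folded into the definition of $u_r$ (or, if not, one more $O(1)$ amplification round handles it); I would state this cleanly as part of the bookkeeping.

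Finally I would verify that the default parameter table is consistent with the above: $R$ as in \eqref{eq:R-choice-explicit} with $\epsilon\leftarrow\varepsilon/u_r$ gives $R=\sqrt{T\log(8u_r/\varepsilon)}$ (the constant $2$ versus the table's $1$ is immaterial up to $\widetilde{\mathcal O}$, but I would reconcile it by tracking constants, or simply note the table records $R$ up to the constant in front); $h_1$ and $Q$ as in \eqref{eq:Qh-choice-explicit}; and $\delta_{\mathrm{off}}=\varepsilon/(32u_r)$ is of the required order $\Theta(\epsilon)$. The main obstacle I anticipate is the error-propagation bookkeeping under normalization: one must be careful that the factor $u_r$ appears \emph{linearly} in the final complexity and only \emph{logarithmically} (through $u_r^2$ inside a log) in the accuracy demands, so that degrading $\epsilon$ and $\delta_1$ to $\Theta(\varepsilon/u_r^2)$ does not create hidden polynomial-in-$u_r$ blowups — this requires the clean separation $\|\text{normalization error}\|\lesssim (\eta_0/\|\bm u_h(T)\|)\cdot\|\text{unnormalized error}\|$ together with the a priori coefficient-sum bounds $\alpha_c,\alpha_d$ from Theorem~\ref{thm:error-fully-discrete-finish}, which keep $\eta_0=\Theta(\|\bm u_0\|+T\|\bm b\|)$ independent of $\varepsilon$. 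Everything else is routine substitution.
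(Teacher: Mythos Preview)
Your proposal is essentially the same three-step argument as the paper's proof: invoke Theorem~\ref{thm:error-fully-discrete-finish} for the quadrature error, bound the Hamiltonian-simulation error by $\delta_1(\alpha_c\|\bm{\psi}_0\|+\alpha_d\|\bm{b}\|)$, and multiply the per-$\mathrm{SEL}$ cost \eqref{eq:times-for-H} by the amplification factor $g=\mathcal{O}(u_r)$.

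The one place to tighten is your normalization bookkeeping. The passage from the unnormalized error $\|\bm{u}_f^a-\bm{u}_f^{\mathrm{disc}}\|$ to the normalized-state error uses only $\bigl\|\tfrac{x}{\|x\|}-\tfrac{y}{\|y\|}\bigr\|\le \tfrac{2\|x-y\|}{\|x\|}$ with $\|x\|=\|\bm{u}_f^{\mathrm{disc}}\|\ge \|\bm{u}_h(T)\|$, so the amplification factor is $O(1/\|\bm{u}_h(T)\|)$, \emph{not} $\eta_0/\|\bm{u}_h(T)\|$. Consequently $\epsilon,\delta_1=\Theta(\varepsilon/u_r)$ already suffice (this is what the paper takes, $\epsilon=\delta_1=\varepsilon/(8u_r)$), not $\Theta(\varepsilon/u_r^2)$; this also resolves the mismatch you flagged with the table entry $\delta_{\mathrm{off}}=\varepsilon/(32u_r)$. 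Finally, your inequality ``$\|\bm{u}_f(T)\|\le \|\bm{u}_h(T)\|$'' is the wrong way around ($\bm{u}_h$ is the projection of $\bm{u}_f$), but this costs nothing: the repetition factor in \eqref{eq:repeat-times-new} is already written with $\|\bm{u}_h(T)\|$ in the denominator, so both the LCU postselection and the position-qubit projection are accounted for in $g=\mathcal{O}(u_r)$.
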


\begin{proof}
	Let \(\bm{u}_f^{\mathrm{disc}}\) be the exact semi-discrete Kannai target in
	\eqref{eq:vf_disc}, let \(\bm{u}_f^{h}\) be the corresponding truncated
	quadrature target in \eqref{eq:uf_h}, and let \(\bm{u}_f^{a}\) be the
	unnormalized vector produced by the quantum circuit in \eqref{eq:uf_a}.
	Since \(\Pi_h\) is a contraction, the augmented-vector errors imply
	\[
	\|\bm u_h^a-\bm u_h(T)\|
	\le
	\|\bm u_f^a-\bm u_f^h(T)\|
	+
	\|\bm u_f^h(T)-\bm u_f^{\mathrm{disc}}(T)\|.
	\]
	For any nonzero vectors \(\bm{x}\) and \(\bm{y}\),
	\[
	\left\|
	\frac{\bm{x}}{\|\bm{x}\|}
	-
	\frac{\bm{y}}{\|\bm{y}\|}
	\right\|
	\le
	\frac{2\|\bm{x}-\bm{y}\|}{\|\bm{x}\|}.
	\]
	Applying this with \(\bm{x}=\bm u_h(T)\) and
	\(\bm{y}=\bm u_h^a\) yields
	\begin{equation}\label{eq:total-error-split-new}
		\left\|
		\ket{\bm u_h^a}
		-
		\ket{\bm u_h(T)}
		\right\|
		\le
		\frac{2\|\bm u_f^{\mathrm{disc}}(T)-\bm u_f^h(T)\|}
		{\|\bm u_h(T)\|}
		+
		\frac{2\|\bm u_f^h(T)-\bm u_f^a\|}
		{\|\bm u_h(T)\|}.
	\end{equation}

	\paragraph{Step 1: Discretization error}
	Choose the discretization parameters $(R,h_1,Q,\delta_{\mathrm{off}})$ as in
	Table~\ref{tab:disc-params}, obtained by applying
	Theorem~\ref{thm:error-fully-discrete-finish} with target accuracy $\varepsilon/(8u_r)$.
	Then
	\[
	\|\bm u_f^{\mathrm{disc}}(T)-\bm u_f^h(T)\|
	\le
	\frac{\varepsilon}{8u_r}
	\bigl(\|\bm u_0\|+T\|\bm b\|\bigr)
	=
	\frac{\varepsilon}{8}\|\bm u_h(T)\|.
	\]
	Thus the first term in \eqref{eq:total-error-split-new} is at most $\varepsilon/4$.
	
	\paragraph{Step 2: Error from imperfect implementation of unitaries}
	Let $U(s)=e^{-iHs}$ and let $U_j^a$ be the unitaries implemented by the selector
	$\mathrm{SEL}$, satisfying $\|U_j^a-U(s_j)\|\le \delta_1$ for all $j$ as in
	\eqref{VaErr}. Using the LCU representations of $\bm{u}_f^{h}$ and $\bm{u}_f^{a}$ yields
	\begin{align}\label{eq:HS-LCU-error-new}
		\|\bm{u}_f^{h}-\bm{u}_f^{a}\|
		=
		\big\|
		\sum_{j=0}^{M-1} c_j\bigl(U(s_j)-U_j^a\bigr)\bm{\psi}_0
		+
		\sum_{j=0}^{M-1} d_j\bigl(U(s_j)-U_j^a\bigr)\bm{b}\big \|
		=  O\!\left(\delta_1 u_r\|\bm u_h(T)\|\right).
	\end{align}
	Choosing \(\delta_1=\Theta(\varepsilon/u_r)\) makes the second term in
	\eqref{eq:total-error-split-new} at most \(\varepsilon/4\).

	Combining the two steps gives
	\[
	\left\|
	\ket{\bm u_h^a}
	-
	\ket{\bm u_h(T)}
	\right\|
	\le \frac{\varepsilon}{2},
	\]
	hence in particular an \(\varepsilon\)-approximation of the normalized physical
	solution state.

	\paragraph{Step 3: Query complexity}
	Each invocation of $\mathrm{SEL}$ uses
	\[
	\widetilde{O}\!\left(\alpha_H s_{\max}+\log\frac{1}{\delta_1}\right)
	=
	\widetilde{O}\!\left(\|L\|\,R+\log\frac{u_r}{\varepsilon}\right)
	\]
	queries to the block-encoding of $H$, by \eqref{eq:times-for-H} and the
	choice \(\delta_1=\Theta(\varepsilon/u_r)\).
	Since the block-encoding of \(H\) is constructed from the block-encoding of \(L\) using only \(O(1)\) calls to \(L\) and \(L^\dagger\), we count this as the same order of queries to the block-encoding oracle for \(L\).
	With the parameter choices in
	Table~\ref{tab:disc-params}, one has $R=\Theta\bigl(\sqrt{T\log(u_r/\varepsilon)}\bigr)$.
	Postselecting the position qubit yields a state proportional to
	\(\bm u_h^a\), which has been shown above to give an
	\(\varepsilon\)-approximation of \(\ket{\bm u_h(T)}\) after normalization.
	The repetition factor is \(g=O(u_r)\) by \eqref{eq:repeat-times-new}.
	The number of calls to $O_{\psi}$ and $O_b$ is also $O(u_r)$
	since each attempt invokes them $O(1)$ times.
\end{proof}

\begin{remark}\label{rem:kernel-truncation}
	Theorem~\ref{thm:complexity} yields only polylogarithmic dependence on the target
	accuracy $\varepsilon$. Many quantum constructions for non-unitary linear
	dynamics approximate the propagator denoted by $\mathcal{S}(T,0)$ via a truncated unitary superposition
	\[
	\mathcal{S}(T,0)u_0 \approx \int_{-R}^{R} K(s)\,e^{-iHs}u_0\,\mathrm{d}s,
	\qquad
	2\int_R^\infty |K(s)|\,\mathrm{d}s \le \varepsilon,
	\]
	for which the dominant matrix-query cost is typically proportional to $R$ up to
	polylogarithmic factors. In our case, the Kannai kernel is Gaussian, giving
	\[
	R=\Theta \bigl(\sqrt{T\log(1/\varepsilon)}\bigr),
	\]
	and hence the dominant term in the query bound is
	$\widetilde{O}\!\bigl(u_r\,\|L\|\sqrt{T\log(1/\varepsilon)}\bigr)$.
	Moreover, since $A=L^\dagger L$, we have $\|L\|=\|A\|^{1/2}$, so the bound exhibits a
	square-root dependence on $\|A\|$ rather than a linear one, improving upon
	\cite{Low25LCHS,optimalSchro,ACL2023LCH2}. Fig.~\ref{fig:kernel-comparison}
	illustrates the truncation behavior of the corresponding kernels. In particular,
	our improved dependence can be summarized as
	\[
	\widetilde{O}\bigl(\sqrt{\alpha_A T\log(1/\varepsilon)}\bigr),
	\]
	as reported in Table~\ref{tab:comparison} for the homogeneous setting.
\end{remark}

\begin{figure}[!htb]
	\centering
	\includegraphics[scale=0.43]{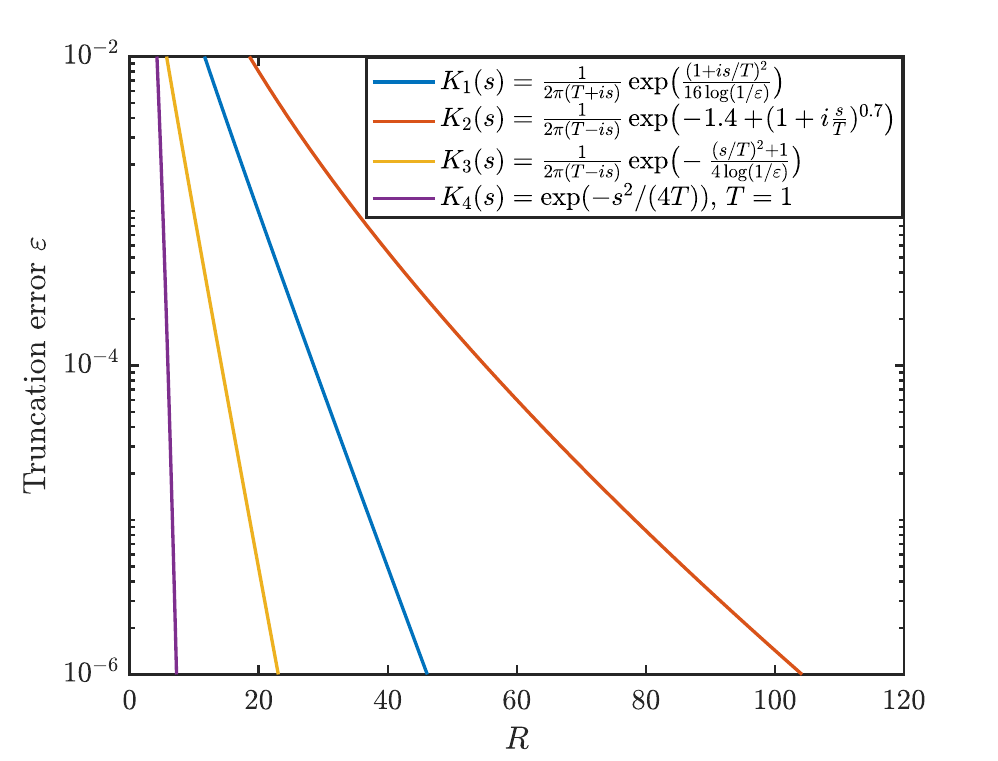}
	\includegraphics[scale=0.38]{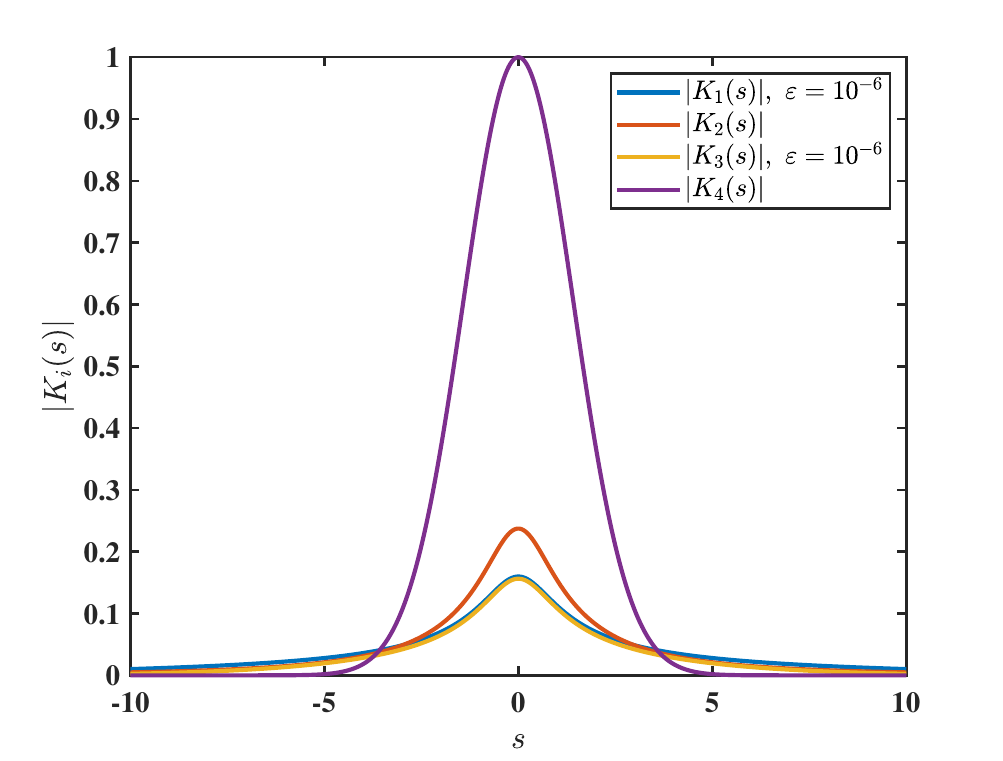}
	\caption{Kernel truncation behavior for several unitary-superposition representations.
		\textbf{Left:} truncation error $\varepsilon(R):=2\int_{R}^{\infty}\!\lvert K_i(s)\rvert\,\mathrm{d}s$
		as a function of the truncation length $R$ (logarithmic scale in $\varepsilon$).
		\textbf{Right:} kernel magnitudes $\lvert K_i(s)\rvert$ on $s\in[-10,10]$ for a fixed target
		precision $\varepsilon=10^{-6}$ and $T=1$.
	}
	\label{fig:kernel-comparison}
\end{figure}

\begin{table}[!htb]
	\centering
	\caption{Query complexities for homogeneous linear ODEs $\mathrm{d}\bm{u}/\mathrm{d}t=-A\bm{u}$
		with time-independent generator $A=L^{\dagger}L$. Here $T>0$ is the final time,
		$\alpha_A:=\|A\|$, $u_r:=\|\bm{u}_0\|/\|\bm{u}(T)\|$, and $\varepsilon\in(0,1)$ is the target precision.
		The parameter $\beta\in(0,1)$. }
	\label{tab:comparison}
	\renewcommand{\arraystretch}{1.25}
	{\small
		\begin{tabular}{c c c c}
			\hline\hline
			Method & Kernel $K(s)$ 
			& Truncation length
			& Matrix queries \\
			\hline
			\shortstack{Optimal\\Schr\"odingerization}
			\cite{optimalSchro}
			&
			$ \vspace{1mm} \displaystyle
			\frac{e^{\frac{(1+is^2/T^2)}{16\log(1/\varepsilon)}}}{2\pi (T +i s)} 
			\vspace{1mm}
			$
			&
			$\displaystyle O\!\bigl(T\log\frac{1}{\varepsilon}\bigr)$
			&
			$\displaystyle
			\widetilde{O}\!\big(
			u_r\,
			\alpha_A\,T\,\log\frac{1}{\varepsilon}
			\big)$
			\\
			\hline
			\shortstack{Improved  LCHS
				\cite{ACL2023LCH2}}
			&\vspace{1mm}
			$\vspace{1mm} \displaystyle
			\frac{e^{2^{\beta}-(1+i \frac{s}{T})^{\beta}}}{2\pi(T-i s)}
			\vspace{1mm}$
			&
			$\displaystyle  O\!\bigl(T\log^{\frac{1}{\beta}}\frac{1}{\varepsilon}\bigr)$
			&
			$\displaystyle
			\widetilde{O}\!\big(
			u_r\,
			\alpha_A\,T\,\log^{\frac{1}{\beta}}\frac{1}{\varepsilon}
			\big)$
			\\
			\hline
			\shortstack{Optimal LCHS
				\cite{Low25LCHS}}
			&
			$\vspace{1mm} \displaystyle
			\frac{e^{-\frac{1+ s^2/T^2}{4\log(1/\varepsilon)}}}{2\pi(T-i s)}
			\vspace{1mm} $
			&
			$\displaystyle  O\!\bigl(T\log\frac{1}{\varepsilon}\bigr)$
			&
			$\displaystyle
			\widetilde{O}\!\big(
			u_r\,
			\alpha_A\,T\,\log\frac{1}{\varepsilon}
			\big)$
			\\
			\hline
			This work
			& 
			$\vspace{1mm} \displaystyle
			\frac{e^{-\frac{s^2}{4T}}}{\sqrt{4\pi T}}
			\vspace{1mm} $
			&
			$\displaystyle  O\!\big(\sqrt{T\log\frac{1}{\varepsilon}}\big)$
			&
			$\displaystyle
			\widetilde{O}\!\big(
			u_r\,
			\sqrt{\alpha_A\,T\,\log\frac{1}{\varepsilon}}
			\big)$
			\\
			\hline\hline
		\end{tabular}
	}
\end{table}

\subsection{Extension beyond the factorized generator $A=L^\dagger L$}\label{subsec:generalA_commuting}
In this subsection we extend the framework to time-independent generators $A\in\mathbb{C}^{N\times N}$
that are not necessarily of the form $L^\dagger L$. For clarity, we focus on the homogeneous system
\begin{equation}\label{eq:hom_generalA}
	\frac{\d}{\d t}\bm{u}(t)=-A\,\bm{u}(t),\qquad \bm{u}(0)=\bm{u}_0,
\end{equation}
since the inhomogeneous case can be handled analogously via Duhamel's principle.
Throughout this subsection we omit auxiliary workspace qubits used by reversible arithmetic and Hamiltonian simulation,
as they are uncomputed at the end.
We use the Cartesian decomposition
\begin{equation}\label{eq:cartesian_decomp}
	A = H_1 + iH_2,
	\qquad
	H_1:=\frac{A+A^\dagger}{2},
	\qquad
	H_2:=\frac{A-A^\dagger}{2i},
\end{equation}
where $H_1$ and $H_2$ are Hermitian. Assuming $H_1\succeq 0$, we may factor
\begin{equation}\label{eq:H1_factorization}
	H_1=L^\dagger L
\end{equation}
for some matrix $L$.

\subsubsection{ Normal case $AA^{\dagger} = A^{\dagger} A$}
If $A$ is normal, then $[H_1,H_2]=0$ and the propagator factors as
\begin{equation}\label{eq:factorization_commuting}
	\bm u(T)=e^{-AT}\bm u_0
	=
	e^{-H_1T}\,e^{-iH_2T}\,\bm u_0
	=
	e^{-L^\dagger L\,T}\,e^{-iH_2T}\,\bm u_0.
\end{equation}
We therefore first implement the unitary $e^{-iH_2T}$ by Hamiltonian simulation and then apply the
dissipative factor $e^{-L^\dagger L\,T}$ using the Kannai-transform LCU routine.

Let $H$ be the Hermitian dilation associated with $L$ as in \eqref{eq:H-Sch-short} and define $U(s)=e^{-iHs}$.
Set
\begin{equation}\label{eq:kannai_commuting 0}
	U_{\kappa_T}:=\int_{\mathbb{R}}\kappa_T(s)\,U(s)\,\d s,
	\qquad
	\kappa_T(s)=(4\pi T)^{-1/2}e^{-s^2/(4T)}.
\end{equation}
After truncation and quadrature, $U_{\kappa_T}$ is realized by an LCU unitary $U_{\mathrm{LCU},T}$.
Let $n_a$ denote the number of ancilla qubits used by this LCU block.

Let the position qubit $p$ encode the $2\times 2$ block structure and initialize
$\ket{\bm\psi_0}:=\ket{0}_p\ket{\bm u_0}_{\mathrm{data}}$.
Then the Kannai representation reads
\begin{equation}\label{eq:kannai_commuting}
	e^{-L^\dagger L\,T}\ket{\bm u_0}
	=
	(\bra{0}_p\otimes I)\,U_{\kappa_T}\,\ket{\bm\psi_0}.
\end{equation}
Combining \eqref{eq:factorization_commuting} and \eqref{eq:kannai_commuting}, we obtain
\begin{equation}\label{eq:final_commuting_formula}
	\ket{\bm{u}(T)}
	\;\approx\;
	(\bra{0^{\otimes r}}_{\mathrm{idx}}\bra{0^{\otimes n_a}}_{\mathrm{anc}}\bra{0}_{p}\otimes I)\,
	U_{\mathrm{LCU},T}\,
	\ket{0^{\otimes r}}_{\mathrm{idx}}\ket{0^{\otimes n_a}}_{\mathrm{anc}}\ket{0}_{p}\,
	\bigl(e^{-iH_2T}\ket{\bm{u}_0}_{\mathrm{data}}\bigr).
\end{equation}
The query complexity is summarized
in Corollary~\ref{lem:commuting_complexity}.
\begin{corollary}\label{lem:commuting_complexity}
	Let $A\in\mathbb C^{N\times N}$ satisfy the decomposition 
	\eqref{eq:cartesian_decomp} with $H_1\succeq 0$ and factorization $H_1=L^\dagger L$
	as in \eqref{eq:H1_factorization}, and assume $[H_1,H_2] =  0$.
	Assume (i) block-encoding access to $L$ as in \eqref{eq:L-block-short}, and
	(ii) Hamiltonian-simulation access to $H_2$, e.g., via an $(\alpha_{H_2},a,0)$ block-encoding oracle.
	Then the state $\ket{\bm u(T)}\propto e^{-AT}\ket{\bm u_0}$ can be prepared using
	\eqref{eq:final_commuting_formula} up to accuracy $\varepsilon$, with total query complexity
	\[
	\widetilde{O}\!\left(
	u_r\Bigl(
	\|L\|\sqrt{T\,\log\!\frac{u_r}{\varepsilon}}
	+\log\!\frac{u_r}{\varepsilon}
	+\alpha_{H_2}T
	\Bigr)
	\right),
	\qquad
	u_r := \frac{\|\bm{u}_0\|}{\|\bm{u}(T)\|},
	\]
	up to polylogarithmic factors. Moreover, the number of calls to the state-preparation
	oracles is $O(u_r)$.
\end{corollary}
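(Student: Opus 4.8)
The plan is to assemble Corollary~\ref{lem:commuting_complexity} by composing the two routines already in hand: Hamiltonian simulation of $e^{-iH_2 T}$ applied to $\ket{\bm u_0}$, followed by the Kannai-transform LCU block of Theorem~\ref{thm:complexity} specialized to the homogeneous factorized problem $\d\bm u/\d t=-L^\dagger L\,\bm u$. Concretely, the factorization~\eqref{eq:factorization_commuting}, valid because $[H_1,H_2]=0$ implies $e^{-AT}=e^{-H_1T}e^{-iH_2T}$, lets us treat $e^{-iH_2T}\ket{\bm u_0}$ as the \emph{effective} initial datum fed into the Kannai routine for $e^{-L^\dagger L\,T}$. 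The only architectural point to check is that the state-preparation oracle $O_\psi$ in the homogeneous circuit of Fig.~\ref{fig:circuit} may be replaced by the composite ``prepare $\ket{\bm u_0}_{\mathrm{data}}$, then apply $I\otimes e^{-iH_2T}$, then embed as $\ket{0}_p\ket{\cdot}_{\mathrm{data}}$'': since $O_\psi$ is used as a black box in the LCU construction~\eqref{eq:block-hom-short}, substituting a different unitary that prepares the desired signal state is transparent and changes nothing in the correctness analysis.

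Next I would track the two error sources. The Hamiltonian simulation of $H_2$ to accuracy $\delta_2$ contributes, after the (norm-nonincreasing) action of $e^{-L^\dagger L\,T}$ and the LCU postselection, at most $\mathcal{O}(\delta_2\,u_r)$ in normalized-state distance — the $u_r$ factor appearing exactly as in Step~2 of the proof of Theorem~\ref{thm:complexity}, because the perturbation of the effective initial state is amplified by $\|\bm u_0\|/\|\bm u(T)\|$ under renormalization. The Kannai-LCU part contributes $\varepsilon/2$ as already established in Theorem~\ref{thm:complexity} with the Table~\ref{tab:disc-params} parameter choices (here with $\bm b=0$, so $\alpha_d$ plays no role and $u_r=\|\bm u_0\|/\|\bm u(T)\|$). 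Choosing $\delta_2=\Theta(\varepsilon/u_r)$ makes the total normalized error $\le\varepsilon$. For the cost: the $H_2$-simulation uses $\widetilde{\mathcal O}(\alpha_{H_2}T+\log(1/\delta_2))=\widetilde{\mathcal O}(\alpha_{H_2}T+\log(u_r/\varepsilon))$ queries per repetition by the standard optimal Hamiltonian-simulation bound, and the Kannai-LCU part uses $\widetilde{\mathcal O}(\|L\|\sqrt{T\log(u_r/\varepsilon)}+\log(u_r/\varepsilon))$ queries per repetition by Theorem~\ref{thm:complexity}. The repetition factor is $g=\mathcal O(u_r)$ as in~\eqref{eq:repeat-times-new}, now driven solely by $\alpha_c\|\bm u_0\|/\|\bm u_h(T)\|=\mathcal O(u_r)$. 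Multiplying per-repetition cost by $g$ gives the stated $\widetilde{\mathcal O}\!\bigl(u_r(\|L\|\sqrt{T\log(u_r/\varepsilon)}+\log(u_r/\varepsilon)+\alpha_{H_2}T)\bigr)$, and the state-preparation count is $\mathcal O(u_r)$ since each attempt invokes the (composite) preparation $\mathcal O(1)$ times.

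One subtlety I would be careful about is whether oblivious amplitude amplification still applies cleanly when the ``input state'' to the LCU block is itself produced by an imperfect unitary $e^{-iH_2T}$ rather than an exact state-preparation oracle. The resolution is that OAA acts on the LCU-block unitary $V_0$ treating its action on \emph{whatever} signal state it is handed as a fixed linear map; the approximation $\|\widetilde{e^{-iH_2T}}-e^{-iH_2T}\|\le\delta_2$ is simply propagated through the (norm-$\le 1$) block-encoded operator and then through renormalization, contributing the $\mathcal O(\delta_2 u_r)$ term above. This is the main place where one must argue rather than invoke, but it is a routine triangle-inequality-plus-renormalization estimate identical in structure to~\eqref{eq:total-error-split-new}--\eqref{eq:HS-LCU-error-new}. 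The remaining ingredients — the normal-case factorization, the block-encoding of $H$ from that of $L$, and the per-repetition query count — are all quoted directly from earlier results, so the proof is essentially bookkeeping; I expect the only real work to be stating the composite state-preparation oracle precisely and confirming the $\delta_2$-to-$\varepsilon$ conversion.
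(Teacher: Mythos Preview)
Your proposal is correct and follows exactly the construction the paper lays out in \eqref{eq:factorization_commuting}--\eqref{eq:final_commuting_formula} and Fig.~\ref{fig:circuit-commuting-preH2}; the paper states the corollary without a detailed proof, and you have supplied precisely the bookkeeping (composite state preparation, $\delta_2=\Theta(\varepsilon/u_r)$, repetition factor $\mathcal{O}(u_r)$) that the paper leaves implicit. The OAA subtlety you flag is real but, as you say, resolved by the same triangle-inequality-and-renormalization pattern used in Theorem~\ref{thm:complexity}.
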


	\subsubsection{Non-normal case $AA^\dagger\neq A^\dagger A$}
	\label{subsubsec:noncommute_splitting}
	In this case, equivalently \([H_1,H_2]\neq0\), the factorization
	$e^{-(H_1+iH_2)T}=e^{-H_1T}e^{-iH_2T}$
	is no longer available. Let \(\tau=T/N_t\). We use the Strang splitting--which has second-order accuracy: 
	\begin{equation}\label{eq:strang_noncommute}
		e^{-(H_1+iH_2)T}
		\approx
		\left(e^{-iH_2\tau/2}e^{-H_1\tau}e^{-iH_2\tau/2}\right)^{N_t}.
	\end{equation}
	
	Let $H$ be the Hermitian dilation associated with $L$ as in
	\eqref{eq:H-Sch-short}, and set \(U(s)=e^{-iHs}\).  For one step of length
	\(\tau\), define
	\[
	U_{\kappa_\tau}
	:=
	\int_{\mathbb R}\kappa_\tau(s)U(s)\,\mathrm ds,
	\qquad
	\kappa_\tau(s)=(4\pi\tau)^{-1/2}e^{-s^2/(4\tau)} .
	\]
	By the Kannai block identity \eqref{eq:kannai_commuting}, with \(T\)
	replaced by \(\tau\) and \(L^\dagger L=H_1\), it holds
	\begin{equation}\label{eq:kannai_block_tau}
		E_\tau
		:=
		(\bra0_p\otimes I)U_{\kappa_\tau}(\ket0_p\otimes I)
		=
		e^{-H_1\tau}.
	\end{equation}

	Let \(\{s_\ell(\tau),c_\ell(\tau)\}_{\ell=0}^{M-1}\) be the quadrature
	nodes and LCU coefficients for the one-step kernel \(\kappa_\tau\), and
	let \(U_\ell^a\) be the Hamiltonian-simulation approximation of
	\(U(s_\ell(\tau))=e^{-iH s_\ell(\tau)}\).  
	Set
	\begin{equation}\label{eq:U_kappa_tau_lcu0}
		\widetilde U_{\kappa_\tau}
		:=
		\sum_{\ell=0}^{M-1}c_\ell(\tau)U_\ell^a,
		\qquad
		\alpha_c(\tau):=\sum_{\ell=0}^{M-1}|c_\ell(\tau)|.
	\end{equation}
	Its physical dissipative block is
	\[
	\widetilde E_\tau
	:=
	(\bra0_p\otimes I)\widetilde U_{\kappa_\tau}(\ket0_p\otimes I)
	\approx e^{-H_1\tau}.
	\]
	Set
	\[
	V_\tau:=I_p\otimes e^{-iH_2\tau/2},
	\qquad
	\widetilde{\mathcal T}_\tau
	:=
	V_\tau\widetilde U_{\kappa_\tau}V_\tau .
	\]
	Here $\widetilde{\mathcal T}_\tau$ acts on the enlarged space
	$\mathcal H_p\otimes\mathcal H_{\rm data}$.  Its one-step physical block on the
	data space is
	\begin{equation}\label{eq:S_tau_physical}
		\widetilde{\mathscr{T}}_\tau
		:=
		(\bra0_p\otimes I)\widetilde{\mathcal T}_\tau(\ket0_p\otimes I)
		=
		e^{-iH_2\tau/2}\widetilde E_\tau e^{-iH_2\tau/2}.
	\end{equation}
	
	We do not postselect after each step.  Instead, we use a single global LCU for the
	entire product.  Since
	\[
	\widetilde{\mathcal T}_\tau
	=
	\sum_{\ell=0}^{M-1}c_\ell(\tau)V_\tau U_\ell^aV_\tau,
	\]
	expanding the $N_t$-fold product gives
	\begin{equation}\label{eq:global_path_expansion}
		(\widetilde{\mathcal T}_\tau)^{N_t}
		=
		\sum_{\boldsymbol\ell\in\{0,\ldots,M-1\}^{N_t}}
		c_{\boldsymbol\ell}(\tau)\,\mathcal U_{\boldsymbol\ell},
	\end{equation}
	where $\boldsymbol\ell=(\ell_1,\ldots,\ell_{N_t})$,
	$
	c_{\boldsymbol\ell}(\tau):=\prod_{k=1}^{N_t}c_{\ell_k}(\tau)$ ,
	and
	$\mathcal U_{\boldsymbol\ell}
	:=
	(V_\tau U_{\ell_{N_t}}^aV_\tau)\cdots(V_\tau U_{\ell_1}^aV_\tau)$.
	The order of these factors is kept fixed, since the operators do not need to commute.
	
	Let $U_{\rm path}$ be the corresponding path-LCU unitary.  Then
	\begin{equation}\label{eq:path_block_full}
		(\bra{\mathbf0}_{\rm path}\otimes I_{p,{\rm data}})
		U_{\rm path}
		(\ket{\mathbf0}_{\rm path}\otimes I_{p,{\rm data}})
		=
		\frac{(\widetilde{\mathcal T}_\tau)^{N_t}}{\alpha_{\rm path}},
	\end{equation}
	with
	\[
	\alpha_{\rm path}
	:=
	\sum_{\boldsymbol\ell}|c_{\boldsymbol\ell}(\tau)|
	=
	\alpha_c(\tau)^{N_t}.
	\]
	Applying the coefficient bound in
	Theorem~\ref{thm:error-fully-discrete-finish} to the one-step kernel
	\(\kappa_\tau\), with target accuracy \(O(\varepsilon/(u_rN_t))\), gives
	\[
	\alpha_c(\tau)
	\le
	1+O\!\left(\frac{\varepsilon}{u_rN_t}\right).
	\]
	Consequently,
	\[
	\alpha_{\rm path}
	=
	\alpha_c(\tau)^{N_t}
	=
	O(1).
	\]
	Hence the LCU normalization does not introduce an exponential loss in \(N_t\).

	The operator induced on the physical data space by the full path construction is
	\begin{equation}\label{eq:S_path_def}
		\widetilde{\mathcal{T}}_{\rm path}(T)
		:=
		(\bra0_p\otimes I)(\widetilde{\mathcal T}_\tau)^{N_t}(\ket0_p\otimes I).
	\end{equation}
	Consequently,
	\begin{equation}\label{eq:path_block_noncommute}
		(\bra{\mathbf0}_{\rm path}\bra0_p\otimes I)
		U_{\rm path}
		(\ket{\mathbf0}_{\rm path}\ket0_p\otimes I)
		=
		\frac{\widetilde{\mathcal{T}}_{\rm path}(T)}{\alpha_{\rm path}}.
	\end{equation}
	If the symmetric quadrature and Hamiltonian-simulation subroutines preserve the
	$p=0$ subspace exactly, then 
	$\widetilde{\mathcal{T}}_{\rm path}(T)=\widetilde{\mathscr T}_{\tau}^{N_t}$;
	otherwise the possible leakage is included in the step implementation error.  For a
	normalized input, the success probability is
	\[
	p_{\rm succ}
	=
	\frac{\|\widetilde{\mathcal{T}}_{\rm path}(T)\bm u_0\|^2}
	{\alpha_{\rm path}^2\|\bm u_0\|^2},
	\]
	so the final amplitude-amplification overhead is governed by the output-norm factor
	$u_r$, not by an exponential factor in $N_t$.  The normalized output is
	\begin{equation}\label{eq:final_noncommuting_formula}
		\ket{\bm u(T)}
		\approx
		\frac{
			(\bra{\mathbf0}_{\rm path}\bra0_p\otimes I)
			U_{\rm path}\ket{\mathbf0}_{\rm path}\ket0_p\ket{\bm u_0}
		}{
			\left\|
			(\bra{\mathbf0}_{\rm path}\bra0_p\otimes I)
			U_{\rm path}\ket{\mathbf0}_{\rm path}\ket0_p\ket{\bm u_0}
			\right\|
		}.
	\end{equation}
	
	\begin{corollary}\label{lem:noncommuting_complexity}
		Let $A\in\mathbb C^{N\times N}$ admit the decomposition
		\eqref{eq:cartesian_decomp} with $H_1\succeq0$ and factorization
		$H_1=L^\dagger L$ as in \eqref{eq:H1_factorization}, and assume
		$[H_1,H_2]\neq0$.  Define
		\[
		\Lambda
		:=
		\|[H_2,[H_2,H_1]]\|
		+
		\|[H_1,[H_1,H_2]]\| .
		\]
		For any $T>0$ and normalized-state accuracy $\varepsilon\in(0,1)$, choose
		\[
		N_t
		=
		\Theta\!\left(\sqrt{\frac{\Lambda T^3u_r}{\varepsilon}}\right),
		\qquad
		u_r:=\frac{\|\bm u_0\|}{\|\bm u(T)\|}.
		\]
		Assume block-encoding access to $L$ as in \eqref{eq:L-block-short} and
		Hamiltonian-simulation access to $H_2$, for instance via an
		$(\alpha_{H_2},a,0)$ block-encoding oracle.  Then the state
		$\ket{\bm u(T)}\propto e^{-AT}\ket{\bm u_0}$ can be prepared using
		\eqref{eq:final_noncommuting_formula} up to accuracy $\varepsilon$ with query
		complexity
		\[
		\widetilde{O}\!\left(
		u_r\left(
		\|L\|\left(\frac{\Lambda T^5u_r}{\varepsilon}\right)^{1/4}
		+
		\sqrt{\frac{\Lambda T^3u_r}{\varepsilon}}
		\log\frac{u_rN_t}{\varepsilon}
		+
		\alpha_{H_2}T
		\right)
		\right).
		\]
		Moreover, the number of calls to the
		state-preparation oracle is $O(u_r)$.
	\end{corollary}
	
	\begin{proof}
		The Strang splitting error satisfies
		\[
		\left\|
		e^{-(H_1+iH_2)T}
		-
		\left(e^{-iH_2\tau/2}e^{-H_1\tau}e^{-iH_2\tau/2}\right)^{N_t}
		\right\|
		=
		O(\Lambda T\tau^2).
		\]
		For normalized states, it is enough to make the operator error
		$O(\varepsilon/u_r)$, using the standard inequality
		\[
		\left\|\frac{\bm x}{\|\bm x\|}-\frac{\bm y}{\|\bm y\|}\right\|
		\le
		\frac{2\|\bm x-\bm y\|}{\|\bm x\|}.
		\]
		This gives the stated choice of $N_t$.
		
		For each step, the Kannai LCU approximates $e^{-H_1\tau}$ in the physical block.
		The quadrature, Hamiltonian-simulation, and block-leakage errors are chosen at level
		$O(\varepsilon/(u_rN_t))$, so that their accumulated contribution is
		$O(\varepsilon/u_r)$.  The same choice gives
		$\alpha_{\rm path}=\alpha_c(\tau)^{N_t}=O(1)$, hence the final success probability
		is $\Omega(1/u_r^2)$ and amplitude amplification contributes a factor $O(u_r)$.
		
		One path-SELECT applies $N_t$ controlled one-step Kannai simulations.  The total
		Kannai cost per attempt is therefore
		\[
		\widetilde{O}\!\left(
		N_t\left(\|L\|\sqrt\tau+
		\log\frac{u_rN_t}{\varepsilon}\right)
		\right)
		=
		\widetilde{O}\!\left(
		\|L\|\left(\frac{\Lambda T^5u_r}{\varepsilon}\right)^{1/4}
		+
		\sqrt{\frac{\Lambda T^3u_r}{\varepsilon}}
		\log\frac{u_rN_t}{\varepsilon}
		\right).
		\]
		The two $H_2$ half steps over all Strang steps have total simulated time $T$, giving
		an additional $\widetilde{O}(\alpha_{H_2}T)$ cost.  Multiplication by the
		final amplification factor $O(u_r)$ yields the claimed complexity.
	\end{proof}
	
	\begin{remark}\label{rem:commute-vs-noncommute}
		If $[H_1,H_2]=0$ and $H_1=L^\dagger L$, the commuting construction of
		Corollary~\ref{lem:commuting_complexity} applies.  If $[H_1,H_2]\neq0$, the above
		global path LCU combines the same Kannai block for $e^{-H_1\tau}$ with Strang's 
		splitting, while avoiding intermediate measurements.
	\end{remark}

\section{Complexity for representative PDEs}\label{sec:cost_of_exs}
In this section, we instantiate the general complexity bound in
Theorem~\ref{thm:complexity} for representative partial differential equations.
We focus on how boundary conditions and spatial discretization determine the
operator norm $\|L\|$, and hence the overall query complexity of the proposed
quantum algorithm.

\subsection{The heat equation}\label{subsubsec:heat}
We begin with the heat equation on a bounded domain $\Omega\subset\mathbb{R}^d$,
\[
\partial_t u = -\mathcal{A}u + f,
\qquad
\mathcal{L}=-\nabla,
\qquad
\mathcal{L}^{\dagger}=\mathrm{div},
\qquad
u(0,x)=u_0(x),
\]
so that $\mathcal{A}=\mathcal{L}^\dagger\mathcal{L}=-\mathrm{div}\,\nabla$.
Consider Dirichlet boundary data
\[
u(t,x)\big|_{x\in\partial\Omega} = u_d(x),
\qquad t\ge 0.
\]
In the first-order embedding $\bm{\psi}=[\,w\;\;v\,]^\top$, the boundary
condition is imposed on the auxiliary variable $w$ by requiring
\[
w(s,x)\big|_{x\in\partial\Omega}=u_d(x),
\qquad \text{for all } s.
\]
Since the reconstruction uses the projection $\Pi_1\bm{\psi}=w$, the recovered
solution $u(t,\cdot)$ inherits the prescribed Dirichlet values.

We discretize the first-order system for $\bm{\psi}=[\,w\;\;v\,]^\top$ on a
staggered grid such that the discrete gradient and divergence are represented by
$-L$ and $L^\dagger$, respectively.

In one dimension, let $\Omega=(0,1)$ and $h=1/N_x$. Place $w$ on primary nodes
$x_i=ih$ and $v$ on midpoints $x_{i+\frac12}=(i+\tfrac12)h$. Collect
\begin{align*}
	\bm{w}_h(s)&=(w_1(s),\ldots,w_{N_x-1}(s))^\top\in\mathbb{R}^{N_x-1},\\
	\bm{v}_h(s)&=(v_{1/2}(s),\ldots,v_{N_x-1/2}(s))^\top\in\mathbb{R}^{N_x},
\end{align*}
and enforce Dirichlet data by fixing $w_0(s)=u_d(x_0)$ and
$w_{N_x}(s)=u_d(x_{N_x})$ for all $s$.
{\color{black}
	Define the interior difference operator
	\(D_h\in\mathbb{R}^{N_x\times(N_x-1)}\) by
	\begin{equation}\label{eq:L-1D}
		(D_h\bm w_h)_{i+\frac12}
		:=
		\begin{cases}
			-\dfrac{w_1}{h}, & i=0,\\[0.7em]
			-\dfrac{w_{i+1}-w_i}{h}, & i=1,\ldots,N_x-2,\\[0.7em]
			\dfrac{w_{N_x-1}}{h}, & i=N_x-1 .
		\end{cases}
	\end{equation}
	For nonhomogeneous Dirichlet data, the boundary values
	\(w_0(s)=u_d(x_0)\) and \(w_{N_x}(s)=u_d(x_{N_x})\) are treated as boundary
	lifting terms. Equivalently,
	\[
	-\frac{w_{i+1}-w_i}{h}
	=
	(D_h\bm w_h)_{i+\frac12}
	+
	(\bm f_{\rm bd}^{(1)})_{i+\frac12},
	\]
	where
	\[
	\bm f_{\rm bd}^{(1)}
	=
	\frac1h
	\bigl(
	u_d(x_0),0,\ldots,0,-u_d(x_{N_x})
	\bigr)^\top
	\in\mathbb R^{N_x}.
	\]
	The vector \(\bm f_{\rm bd}^{(1)}\) is absorbed into the inhomogeneous source
	term. In particular, for homogeneous Dirichlet data,
	\(\bm f_{\rm bd}^{(1)}=0\). Then
	\[
	(\nabla w)_{i+\frac12}
	\approx
	-(D_h\bm w_h+\bm f_{\rm bd}^{(1)})_{i+\frac12},
	\qquad
	(\operatorname{div} v)_i
	\approx
	(D_h^\top \bm v_h)_i .
	\]
	We write \(L:=D_h\) for the interior difference operator in one dimension.
}

In $d$ dimensions on $\Omega=(0,1)^d$, order interior nodal values of $w$ into
$\bm{w}_h(s)\in\mathbb{R}^{M_w}$ with $M_w=(N_x-1)^d$, and collect staggered face
values of $v$ into $\bm{v}_h(s)\in\mathbb{R}^{M_v}$. 
{\color{black}
	For nonhomogeneous Dirichlet data, the tensor-product lifting of
	\(\bm f_{\rm bd}^{(1)}\) in each coordinate direction is absorbed into the
	source vector. Thus the operator \(L\) below acts only on the interior nodal
	unknowns.}
Lift $D_h$ to each
coordinate direction by
\begin{equation}\label{eq:Lk-kron}
	L^{(k)}=
	I_{N_x-1}^{\otimes (k-1)}\otimes D_h\otimes I_{N_x-1}^{\otimes (d-k)},
	\qquad k=1,\ldots,d,
\end{equation}
and stack the directional operators as
\[
L=
\begin{bmatrix}
	(L^{(1)})^{\dagger} & (L^{(2)})^{\dagger} & \cdots & (L^{(d)})^{\dagger}
\end{bmatrix}^{\dagger}.
\]
Then $L:\mathbb{R}^{M_w}\to\mathbb{R}^{M_v}$ discretizes $-\nabla$ on the
staggered grid and $L^\dagger$ discretizes $\mathrm{div}$. 
The semi-discrete
first-order system is obtained from the continuum formulation by replacing
$(\mathcal{L},\mathcal{L}^\dagger)$ with $(L,L^\dagger)$.

\begin{remark}\label{remark:Neumann bd}
	For homogeneous Neumann data $\partial_n u=0$ on $\partial\Omega$, it is natural
	to impose the boundary condition on the flux component. One convenient staggered
	placement samples $v$ on primary nodes and $w$ on staggered locations, so that
	the normal component of $v$ can be fixed to zero on $\partial\Omega$. The
	resulting discrete gradient--divergence pair has the same structure as above,
	with the roles of $w$ and $v$ interchanged.
\end{remark}

The following corollary specializes Theorem~\ref{thm:complexity} to the heat
equation under the staggered finite-difference discretization described above.

\begin{corollary}\label{cor:heat-complexity}
	Consider $\Omega=(0,1)^d$ and the staggered finite-difference discretization
	above with mesh size $h$ in each coordinate direction. Let $L$ denote the
	resulting discrete gradient operator.
	Assume the hypotheses of Theorem~\ref{thm:complexity} hold.
	Then there exists a  quantum algorithm that prepares an
	$\varepsilon$-approximation of the normalized solution state $\ket{\bm{u}_h(T)}$
	approximating $u$ 
	with $\Omega(1)$ success probability and a success flag, using
	\[
	\widetilde{O}\!\left(
	u_r\left(
	\frac{\sqrt{d}}{h}\,\sqrt{T\log\frac{u_r}{\varepsilon}}
	+
	\log\frac{u_r}{\varepsilon}
	\right)
	\right)
	\]
	queries to the block-encoding oracle for $L$, where
	\[
	u_r=
	O\!\left(
	\frac{\|u_0\|_{L^2(\Omega)}+T\|f\|_{L^2(\Omega)}}{\|u(T)\|_{L^2(\Omega)}}
	\right).
	\]
	Moreover, the number of calls to the state-preparation oracles $O_\psi$ and
	$O_b$ is $O(u_r)$.
\end{corollary}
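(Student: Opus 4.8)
The plan is to treat Corollary~\ref{cor:heat-complexity} as a direct instantiation of Theorem~\ref{thm:complexity}. That theorem already delivers, for any factorized generator $A=L^\top L$, the query complexity $\widetilde{\mathcal{O}}\big(u_r(\|L\|\sqrt{T\log(u_r/\varepsilon)}+\log(u_r/\varepsilon))\big)$ to the block-encoding of $H$ and the $\mathcal{O}(u_r)$ count of state-preparation calls, under the parameter choices in its accompanying table. So the only real work is to supply the two PDE-specific quantities it leaves abstract: the operator norm $\|L\|$ of the staggered discrete gradient, and the ratio $u_r$ reexpressed in continuum $L^2$ norms.

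First I would bound $\|L\|$ for the stacked operator $L=[(L^{(1)})^\dagger\ \cdots\ (L^{(d)})^\dagger]^\dagger$. Since $A=L^\top L=\sum_{k=1}^d (L^{(k)})^\top L^{(k)}$, the triangle inequality gives $\|L\|^2=\|A\|\le\sum_{k=1}^d\|(L^{(k)})^\top L^{(k)}\|=\sum_{k=1}^d\|L^{(k)}\|^2$, and by the Kronecker structure \eqref{eq:Lk-kron} each $\|L^{(k)}\|=\|L^{(1)}\|$, so $\|A\|\le d\|L^{(1)}\|^2$. It then remains to estimate $\|L^{(1)}\|$ from \eqref{eq:L-1D}: the Gram matrix $(L^{(1)})^\top L^{(1)}$ equals $h^{-2}$ times the standard tridiagonal second-difference matrix, whose eigenvalues are $\tfrac{4}{h^2}\sin^2(k\pi/(2N_x))\le 4/h^2$, whence $\|L^{(1)}\|\le 2/h$. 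Thus the additive factor $d$ on $\|A\|$ becomes a $\sqrt d$ on $\|L\|=\sqrt{\|A\|}\le 2\sqrt d/h$, and substituting $\|L\|=\mathcal{O}(\sqrt d/h)$ into Theorem~\ref{thm:complexity} reproduces the advertised $\widetilde{\mathcal{O}}\big(u_r(\tfrac{\sqrt d}{h}\sqrt{T\log(u_r/\varepsilon)}+\log(u_r/\varepsilon))\big)$.

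Next I would translate $u_r$. By \eqref{eq:vr-def-new} and the embeddings \eqref{eq:psi_b} one has $u_r=(\|\bm u_0\|+T\|\bm f_h\|)/\|\bm u_h(T)\|$ in the discrete $\ell^2$ norm. Adopting the grid-function convention in which $\mathbb{R}^{M_w}$ carries the mass-weighted inner product $h^d\langle\cdot,\cdot\rangle$ — a diagonal rescaling that commutes with the Hermitian dilation, the block-encoding, and the LCU construction of Section~\ref{sec:quantum-implementation} — the discrete norms are consistent Riemann-sum approximations of $\|u_0\|_{L^2(\Omega)}$, $\|f\|_{L^2(\Omega)}$, and $\|u(T,\cdot)\|_{L^2(\Omega)}$, giving $u_r=\mathcal{O}\big((\|u_0\|_{L^2}+T\|f\|_{L^2})/\|u(T)\|_{L^2}\big)$. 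For inhomogeneous Dirichlet data I would note that fixing $w_0$ and $w_{N_x}$ in \eqref{eq:L-1D} transfers a boundary term to the right-hand side, which is absorbed into $\bm b$ without changing the leading-order estimate, and that the staggered layout renders $L$ rectangular, accommodated by the same dilation padding already used in the general construction.

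I expect the only genuinely non-mechanical point to be the combination in the second paragraph: the triangle-inequality splitting over the $d$ directional blocks together with the sharp one-dimensional bound $\|L^{(1)}\|\le 2/h$, which is what prevents the complexity from scaling like $d/h$ instead of $\sqrt d/h$. The norm identification for $u_r$ and the bookkeeping for boundary data and rectangularity are routine once the discretization is fixed, and everything else is a verbatim appeal to Theorem~\ref{thm:complexity}.
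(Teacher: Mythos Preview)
Your proposal is correct and follows the same approach as the paper: instantiate Theorem~\ref{thm:complexity} by plugging in $\|L\|=\Theta(\sqrt{d}/h)$ for the staggered discrete gradient. The paper's own proof simply asserts that norm bound and substitutes, whereas you supply the supporting details (the block decomposition $\|L\|^2\le d\|L^{(1)}\|^2$ and the tridiagonal eigenvalue estimate $\|L^{(1)}\|\le 2/h$), together with the discrete-to-continuum identification of $u_r$ that the paper leaves implicit.
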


\begin{proof}
	For the staggered finite-difference discretization on $(0,1)^d$, the discrete
	gradient operator satisfies $\|L\|=\Theta(\sqrt{d}/h)$. Substituting this bound
	into Theorem~\ref{thm:complexity} yields the stated query complexity. The bound
	on the number of calls to $O_\psi$ and $O_b$ follows from the same theorem.
\end{proof}

\subsection{Viscous Hamilton--Jacobi equations via entropy penalization}
\label{subsec:HJ-entropy}

Hamilton--Jacobi equations are fundamental in mechanics and optimal control.
Since classical solutions may develop finite-time singularities, one often
passes to vanishing-viscosity regularizations and viscosity solutions.  We use
the entropy-penalization framework of~\cite{JinLiu25HJ,Gome07} only as a
mechanism for producing a linear parabolic surrogate suitable for the
Kannai-transform algorithm.  We do not reproduce the full entropy-penalization
analysis here.

For simplicity, we absorb the calibration constant in the entropy-penalization
kernel into the viscosity parameter and write the effective viscosity again as
\(\nu\).  Thus the viscous Hamilton--Jacobi equation is written as
\begin{equation}\label{eq:HJ-visc}
	\partial_t S^\nu(t,\bm x)
	+
	H\!\bigl(\bm x,\nabla S^\nu(t,\bm x)\bigr)
	=
	\nu\,\Delta S^\nu(t,\bm x),
	\qquad
	S^\nu(0,\bm x)=S_0(\bm x),
	\qquad \bm x\in\mathbb T^d .
\end{equation}

The general entropy-penalization setting starts from a strictly convex kinetic
energy \(K:\mathbb R^d\to\mathbb R\) and, possibly, a potential \(V(\bm x)\).
The corresponding Hamiltonian may be written as
\begin{equation}\label{eq:H-legendre}
	H(\bm x,\bm p)
	=
	\sup_{\bm v\in\mathbb R^d}
	\bigl\{-\bm p\cdot\bm v-K(\bm v)+V(\bm x)\bigr\}.
\end{equation}

The continuous-time limit of the entropy-penalized scheme gives a linear
parabolic surrogate for an auxiliary variable \(u\) of the form
\cite[Lemma~5]{JinLiu25HJ}
\begin{equation}\label{eq:HJ-linear-drift-diff}
	\partial_t u(t,\bm x)
	=
	\bm\mu\cdot\nabla u(t,\bm x)
	+
	\nabla\!\cdot\!\bigl(\Sigma\nabla u(t,\bm x)\bigr),
	\qquad
	u(0,\bm x)=u_0(\bm x),
\end{equation}
where the drift vector \(\bm\mu\in\mathbb R^d\) and the diffusion matrix
\(\Sigma\in\mathbb R^{d\times d}\) are constant in \(\bm x\) and \(t\).  More
precisely, with
\[
\bm v_\star:=\arg\min_{\bm v\in\mathbb R^d}K(\bm v),
\]
one has
\begin{equation}\label{eq:mu-nu-from-K}
	\bm\mu=\bm v_\star,
	\qquad
	\Sigma
	=
	\bm v_\star\bm v_\star^{\top}
	+
	2\nu\bigl(\nabla^2K(\bm v_\star)\bigr)^{-1}.
\end{equation}

The Cole-Hopf relation is
\begin{equation}\label{eq:HJ-recover}
	S^\nu(t,\bm x)
	=
	-2\nu\log u(t,\bm x),
\end{equation}
up to an additive normalization constant.  We assume that this initial
normalization is known classically; otherwise the procedure recovers
\(S^\nu\) only up to an additive constant.  We also assume positivity of the
surrogate solution, as holds for positive Cole-Hopf initial data under the
heat semigroup.

Since \(\bm\mu\) is constant, the drift in \eqref{eq:HJ-linear-drift-diff} can
be removed by the translation
\begin{equation}\label{eq:translation}
	\widetilde u(t,\bm x):=u(t,\bm x-\bm\mu t).
\end{equation}
Then \(\widetilde u\) solves
\begin{equation}\label{eq:pure-diff}
	\partial_t\widetilde u(t,\bm x)
	=
	\nabla\!\cdot\!\bigl(\Sigma\nabla\widetilde u(t,\bm x)\bigr)
	=:-\mathcal A\,\widetilde u(t,\bm x).
\end{equation}
For a nonzero drift, the value \(u(T,\bm x_a)\) is recovered from
\(\widetilde u(T,\bm x_a+\bm\mu T)\) on the torus.

In the complexity estimate below, we further specialize to the isotropic
constant case
\[
\bm\mu=0,
\qquad
\Sigma=\nu I_d,
\]
so that
\[
\mathcal A=-\nu\Delta .
\]
This is the heat-type surrogate to which we apply the general
Kannai-transform state-preparation theorem.

Let \(N_x\) be the number of grid points per direction and set \(h:=1/N_x\).
On the periodic grid
\[
\mathcal G_h
:=
\{\,\bm x_{\bm n}=h\bm n:
\bm n\in\{0,1,\dots,N_x-1\}^d\,\},
\]
we represent \(\widetilde u(t,\cdot)\) by the value vector
\(\widetilde{\bm u}_h(t)\in\mathbb C^{N_x^d}\), defined by
\[
\bigl(\widetilde{\bm u}_h(t)\bigr)_{\bm n}
=
\widetilde u(t,\bm x_{\bm n}).
\]
Let \(U_F\) be the unitary discrete Fourier transform and define
\[
\widehat{\bm u}(t):=U_F\widetilde{\bm u}_h(t),
\qquad
\widetilde{\bm u}_h(t)=U_F^\dagger\widehat{\bm u}(t).
\]
We use the periodic second-order finite-difference discretization of
\(-\nu\Delta\).  The Fourier modes are indexed by
\[
\bm k=(k_1,\dots,k_d)\in\mathcal K,
\qquad
\mathcal K
:=
\left\{
-\frac{N_x}{2},-\frac{N_x}{2}+1,\dots,\frac{N_x}{2}-1
\right\}^d ,
\qquad N_x\ \text{even}.
\]
The semi-discrete generator is diagonalised as
\begin{equation}\label{eq:A-diag}
	A=U_F^\dagger D U_F,
	\qquad
	L_{\mathrm{HJ}}:=D^{1/2}U_F,
	\qquad
	A=L_{\mathrm{HJ}}^\dagger L_{\mathrm{HJ}},
\end{equation}
where
\begin{equation}\label{eq:FD-symbol}
	D(\bm k)
	=
	\nu\sum_{r=1}^d
	\frac{4}{h^2}
	\sin^2\!\left(\frac{\pi k_r}{N_x}\right).
\end{equation}
Consequently,
\begin{equation}\label{eq:LHJ-norm}
	\|L_{\mathrm{HJ}}\|
	=
	\|D^{1/2}\|
	\le
	\frac{2\sqrt{d\nu}}{h}.
\end{equation}

Applying Theorem~\ref{thm:complexity} to the homogeneous semi-discrete system
\[
\frac{\mathrm d}{\mathrm dt}\widetilde{\bm u}_h(t)
=
-A\widetilde{\bm u}_h(t)
\]
gives the following state-preparation bound.

\begin{corollary}
	\label{cor:HJ-complexity}
	Let \(A\) and \(L_{\mathrm{HJ}}\) be defined in \eqref{eq:A-diag}.  There exists
	a quantum algorithm that prepares an \(\varepsilon\)-approximation of the
	normalized amplitude state $\ket{\widetilde{\bm u}_h(T)}$
	with \(\Omega(1)\) success probability and query complexity
	\[
	\widetilde{O}\!\left(
	u_r
	\left(
	\frac{\sqrt{d\nu}}{h}
	\sqrt{T\log\!\left(\frac{u_r}{\varepsilon}\right)}
	+
	\log\!\left(\frac{u_r}{\varepsilon}\right)
	\right)
	\right),
	\qquad
	u_r:=
	\frac{\|\widetilde{\bm u}_h(0)\|}
	{\|\widetilde{\bm u}_h(T)\|}.
	\]
\end{corollary}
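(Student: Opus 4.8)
The plan is to invoke Theorem~\ref{thm:complexity} directly, after identifying all the problem-specific quantities. First I would record the setting: the surrogate equation is the homogeneous linear ODE $\d\widetilde{\bm u}_h/\d t = -A\widetilde{\bm u}_h$ with $A = U_F^\dagger D\,U_F$ self-adjoint and nonnegative, and with the factorization $A = L_{\mathrm{HJ}}^\dagger L_{\mathrm{HJ}}$ given by \eqref{eq:A-diag}, so that the hypotheses of Theorem~\ref{thm:complexity} (self-adjointness of $\mathcal A = \mathcal L^\top\mathcal L$, availability of a block-encoding of $L = L_{\mathrm{HJ}}$) are met. Since the source term is absent, $\bm b = \bm 0$, and the ratio $u_r$ from \eqref{eq:vr-def-new} collapses to $u_r = \|\widetilde{\bm u}_h(0)\|/\|\widetilde{\bm u}_h(T)\|$, matching the statement.

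Next I would plug in the operator-norm estimate. By \eqref{eq:LHJ-norm}, $\|L_{\mathrm{HJ}}\| = \max_{\bm k}\sqrt{D(\bm k)} = \max_{\bm k}2\pi\sqrt{\nu}\,\|\bm k\| \le 2\pi\sqrt{\nu}\cdot \tfrac{\sqrt d\,N}{2} = \pi\sqrt{d\nu}/h$, using $\|\bm k\|\le \sqrt d\,(N/2)$ over $\bm k\in\mathcal K$ and $h = 1/N$; up to the constant $\pi$ this is $\Theta(\sqrt{d\nu}/h)$. Substituting $\|L\| = \Theta(\sqrt{d\nu}/h)$ into the query bound of Theorem~\ref{thm:complexity},
\[
\widetilde{\mathcal O}\!\left(u_r\Bigl(\|L\|\sqrt{T\log\tfrac{u_r}{\varepsilon}} + \log\tfrac{u_r}{\varepsilon}\Bigr)\right),
\]
gives precisely
\[
\widetilde{\mathcal O}\!\left(u_r\Bigl(\tfrac{\sqrt{d\nu}}{h}\sqrt{T\log\tfrac{u_r}{\varepsilon}} + \log\tfrac{u_r}{\varepsilon}\Bigr)\right),
\]
which is the claimed complexity. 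The $\Omega(1)$-success-probability and success-flag guarantees are inherited verbatim from Theorem~\ref{thm:complexity}, as is the $\mathcal O(u_r)$ bound on state-preparation calls (here only $O_\psi$ is used, since $\bm b = \bm 0$).

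One point that needs a brief remark rather than real work: the block-encoding of $L_{\mathrm{HJ}} = D^{1/2}U_F$ is available because $U_F$ is implemented by the quantum Fourier transform and $D^{1/2}$ is diagonal with efficiently computable entries $2\pi\sqrt{\nu}\,\|\bm k\|$, so a block-encoding with normalization $\alpha_L = \Theta(\|L_{\mathrm{HJ}}\|) = \Theta(\sqrt{d\nu}/h)$ costs $\widetilde{\mathcal O}(1)$ elementary gates; this is exactly the access model assumed in Section~\ref{sec:quantum-implementation}. The main (and only mild) obstacle is thus purely bookkeeping: making sure the constant $\pi$ and the factor $\sqrt d$ from $\|\bm k\|\le \sqrt d\,N/2$ are absorbed correctly into the $\Theta$ and $\widetilde{\mathcal O}$ notation, and confirming that the isotropic specialization $\Sigma = \nu I_d$ was already used to reduce \eqref{eq:pure-diff} to $\mathcal A = -\nu\Delta$ so that the symbol is genuinely $D(\bm k) = \nu(2\pi)^2\|\bm k\|^2$. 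No new analysis is required beyond quoting Theorem~\ref{thm:complexity} and the norm bound \eqref{eq:LHJ-norm}.
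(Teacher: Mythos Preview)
Your proposal is correct and matches the paper's own argument: the paper simply states that Corollary~\ref{cor:HJ-complexity} follows by applying Theorem~\ref{thm:complexity} to the homogeneous semi-discrete system, using the norm bound~\eqref{eq:LHJ-norm}, which is exactly what you do. Your additional remarks about the block-encoding of $L_{\mathrm{HJ}}=D^{1/2}U_F$ via the QFT and the reduction of $u_r$ in the homogeneous case go slightly beyond what the paper spells out, but are consistent with it.
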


{\color{black}
	
	\subsubsection{Output model and accuracy for the Hamilton--Jacobi observable}
	
	Corollary~\ref{cor:HJ-complexity} prepares only the normalized amplitude state
	of the linear parabolic surrogate.  The Hamilton--Jacobi observable is
	obtained through the Cole-Hopf readout, not by full tomography.
	
	Let \(\bm u_h(T)\) denote the unnormalized discrete vector after undoing any
	drift translation.  In the isotropic case \(\bm\mu=0\), this is simply
	\(\bm u_h(T)=\widetilde{\bm u}_h(T)\). 
	For a queried point \(\bm x_a\), let \(\bm j(a)\) be the corresponding grid
	index.  If \(\bm\mu\neq0\), \(\bm j(a)\) should be replaced by the grid index
	corresponding to \(\bm x_a+\bm\mu T\) on the torus.
	
	Define
	\[
	p_a
	:=
	\left|
	\left\langle \bm j(a)\middle|\bm u_h(T)\right\rangle
	\right|^2,
	\qquad
	Q_a
	:=
	p_a \|\bm u_h(T)\|^2
	=
	\left|(\bm u_h(T))_{\bm j(a)}\right|^2 .
	\]
	Then the Cole-Hopf point-value readout is
	\begin{equation}\label{eq:HJ-point-readout}
		S_{\nu,h}(T,\bm x_a)
		=
		-\nu\log Q_a .
	\end{equation}
	Thus the readout requires estimating both the point probability \(p_a\) in the
	normalized state and the normalization factor \(\|\bm u_h(T)\|^2\).
	
	Let
	\[
	s_a:=|S_{\nu,h}(T,\bm x_a)|
	\]
	and assume \(s_a>0\).  We require the physical point-value accuracy
	\[
	\frac{
		|\widehat S_a-S(T,\bm x_a)|
	}{s_a}
	\lesssim \delta .
	\]
	The deterministic error has two components: the vanishing-viscosity error and
	the spatial discretization error.  We use the model
	\begin{equation}\label{eq:HJ-visc-error}
		\|S^\nu(T,\cdot)-S(T,\cdot)\|_{L^\infty}
		\le
		C_{\rm vv}D_{\rm vv}(d)\nu^\beta,
		\qquad
		\frac12\le \beta\le 1,
	\end{equation}
	where \(D_{\rm vv}(d)\) records the dimension dependence of the available
	vanishing-viscosity estimate.  In the conservative scaling used in
	\cite{JinLiu25HJ}, one may take \(D_{\rm vv}(d)=d\), with worst-case rate
	\(\beta=1/2\).  For the second-order periodic finite-difference
	discretization above, we use the spatial error model \(O(dh^2)\).  Hence it is
	sufficient to choose
	\begin{equation}\label{eq:nu-h-delta-HJ}
		\nu
		=
		\Theta\!\left(
		\left(
		\frac{\delta s_a}{D_{\rm vv}(d)}
		\right)^{1/\beta}
		\right),
		\qquad
		h
		=
		\Theta\!\left(
		\sqrt{\frac{\delta s_a}{d}}
		\right),
	\end{equation}
	up to problem-dependent constants.
	
	It remains to specify the readout accuracy.  If
	\[
	\widehat Q_a=Q_a(1+\eta_Q),
	\]
	then for \(|\eta_Q|\le1/2\),
	\[
	|\widehat S_{\nu,h}(T,\bm x_a)-S_{\nu,h}(T,\bm x_a)|
	=
	\nu|\log(1+\eta_Q)|
	\lesssim
	\nu|\eta_Q|.
	\]
	Therefore an \(O(\delta)\) relative readout error in
	\(S_{\nu,h}(T,\bm x_a)\) is ensured by
	\begin{equation}\label{eq:Q-relative-accuracy}
		|\eta_Q|
		=
		O\!\left(
		\min\left\{
		\frac12,\frac{\delta s_a}{\nu}
		\right\}
		\right).
	\end{equation}
	The fact that the admissible relative error in \(Q_a\) can become larger when
	\(\nu\) is small is not a contradiction: the Cole-Hopf map multiplies the
	logarithmic error by \(\nu\).  The difficulty for small \(\nu\) is instead
	reflected in the point probability \(p_a\), which may be exponentially small.
	
	Substituting \eqref{eq:nu-h-delta-HJ} into
	Corollary~\ref{cor:HJ-complexity}, and choosing the state-preparation accuracy
	small enough so that its induced additive error in \(p_a\) is negligible
	compared with \(p_a\eta_Q\), gives the matrix-query cost per successful
	preparation of the normalized surrogate state as
	\begin{equation}\label{eq:HJ-delta-complexity-general}
		C_{\rm prep}
		=
		\widetilde{O}\!\left(
		u_r
		\left(
		d\,D_{\rm vv}(d)^{-\frac{1}{2\beta}}
		(\delta s_a)^{\frac{1}{2\beta}-\frac12}
		\sqrt T
		+
		1
		\right)
		\right),
	\end{equation}
	where polylogarithmic factors in \(u_r\), \(1/\delta\), and the readout-related
	precision parameters are suppressed.  In the conservative scaling
	\(D_{\rm vv}(d)=d\), this becomes
	\begin{equation}\label{eq:HJ-delta-complexity}
		C_{\rm prep}
		=
		\widetilde{O}\!\left(
		u_r
		\left(
		d^{1-\frac{1}{2\beta}}
		(\delta s_a)^{\frac{1}{2\beta}-\frac12}
		\sqrt T
		+
		1
		\right)
		\right).
	\end{equation}
	This is only the state-preparation cost.  It is not yet the full cost of
	estimating \(S(T,\bm x_a)\), since the Cole-Hopf readout also requires
	estimating \(p_a\) and \(\|\bm u_h(T)\|^2\).
	
	With direct sampling, the point-probability estimation requires
	\[
	N_{\rm read}
	=
	\widetilde O\!\left(
	\frac{1}{p_a\eta_Q^2}
	\right)
	\]
	state preparations, up to confidence logarithms and the analogous cost for
	estimating the normalization factor \(\|\bm{u}_h(T)\|^2\).  Therefore the total
	direct-sampling cost scales as
	\[
	C_{\rm total}
	=
	\widetilde O\!\left(
	N_{\rm read}\,C_{\rm prep}
	\right).
	\]
	Amplitude estimation can improve the sampling dependence, but the readout cost
	remains a separate factor from the state-preparation complexity.
}

\subsection{Biharmonic diffusion equation}
\label{subsubsec:biharmonic}

We next consider the biharmonic diffusion equation
\[
\partial_t u = -\Delta^2 u + f,\qquad u(0,x)=u_0(\bm x),\qquad \bm x\in\Omega\subset\mathbb{R}^d.
\]
We impose boundary conditions so that $\Delta^2$ is self-adjoint and nonnegative on $L^2(\Omega)$,
e.g., the simply supported plate conditions
\[
u=0,\qquad \Delta u=0 \quad\text{on }\partial\Omega.
\]
Then $\mathcal{A}:=(-\Delta)^2=\mathcal{L}^\dagger\mathcal{L}$ with $\mathcal{L}:=-\Delta$. In the
first-order embedding $\bm{\psi}=[\,w\;\;v\,]^\top$ in \eqref{eq:def psi-wv}--\eqref{eq:p}, we impose
the homogeneous boundary conditions
\[
w(s,\cdot)=0,\qquad v(s,\cdot)=0 \quad\text{on }\partial\Omega .
\]

We discretize $-\Delta$ by standard second-order finite differences on a uniform grid with mesh size
$h=1/N_x$. In one dimension, collecting interior nodal values into
$\bm{w}_h(s),\bm{v}_h(s)\in\mathbb{R}^{N_x-1}$ and enforcing homogeneous Dirichlet data, we obtain
\[
-(\Delta w)_i \approx (D_h\bm w_h)_i,
\qquad
-(\Delta v)_i \approx ((D_h)^\top\bm v_h)_i,
\quad i=1,\ldots,N_x-1,
\]
where $D_h$ denotes the standard second-order finite-difference negative Laplacian.

In $d$ dimensions on $\Omega=(0,1)^d$, letting $M_w=(N_x-1)^d$ and identifying
$\bm{w}_h(s),\bm{v}_h(s)\in\mathbb{R}^{M_w}$, we define the discrete Laplacian by the Kronecker sum
\[
L:=\sum_{k=1}^d
\Bigl(I_{N_x-1}^{\otimes (k-1)}\otimes D_h\otimes I_{N_x-1}^{\otimes (d-k)}\Bigr),
\]
so that $L$ is symmetric positive definite and approximates $-\Delta$ with homogeneous Dirichlet data.

\begin{corollary}\label{cor:biharmonic-complexity}
	Consider the biharmonic diffusion equation on $\Omega=(0,1)^d$ with mesh size
	$h=1/N_x$ and the discretization described above. Let $L$ denote the discrete Laplacian.
	Assume that the
	hypotheses of Theorem~\ref{thm:complexity} hold. Then  there is a quantum
	algorithm to prepare an $\varepsilon$-approximation
	of the normalized discrete solution state $\ket{\bm{u}_h(T)}$ with
	$\Omega(1)$ success probability, using
	\[
	\widetilde{O}\!\left(
	u_r\left(
	\frac{d}{h^2}\sqrt{T\,\log\!\frac{u_r}{\varepsilon}}
	+
	\log\!\frac{u_r}{\varepsilon}
	\right)\right)
	\]
	queries to the block-encoding oracle for $L$, where
	\[
	u_r=
	O\!\left(
	\frac{\|u_0\|_{L^2(\Omega)}+T\|f\|_{L^2(\Omega)}}{\|u(T)\|_{L^2(\Omega)}}
	\right).
	\]
	Moreover, the number of calls to the state-preparation oracles for
	$\bm{\psi}_0$ and $\bm{b}$ is $O(u_r)$.
\end{corollary}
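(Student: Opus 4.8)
The plan is to follow the same route as the proof of Corollary~\ref{cor:heat-complexity}: identify the operator norm $\|L\|$ of the discrete spatial operator entering the first-order embedding, substitute it into the general bound of Theorem~\ref{thm:complexity}, and translate the abstract ratio $u_r$ back to the continuous $L^2$ quantities.

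First I would compute $\|L\|$ for the discrete Laplacian. The one-dimensional matrix $L^{(1)}=\tfrac{1}{h^2}\,\mathrm{tridiag}(-1,2,-1)$ in \eqref{eq:L-bih-1D} is symmetric positive definite with eigenvalues $\tfrac{4}{h^2}\sin^2\!\bigl(\tfrac{k\pi h}{2}\bigr)$, $k=1,\dots,N_x-1$, so $\|L^{(1)}\|\le 4/h^2=\Theta(1/h^2)$. Since $L=\sum_{k=1}^d I_{N_x-1}^{\otimes(k-1)}\otimes L^{(1)}\otimes I_{N_x-1}^{\otimes(d-k)}$ is a Kronecker sum of commuting copies of $L^{(1)}$, its eigenvalues are sums of $d$ one-dimensional eigenvalues, hence $\|L\|\le 4d/h^2=\Theta(d/h^2)$. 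In this instantiation $\mathcal{L}=-\Delta$ discretizes to a symmetric $L$, so $A=L^\dagger L=L^2$ and $\sqrt{\|A\|}=\|L\|=\Theta(d/h^2)$; the anti-Hermitian dilation $\widetilde{L}$ and the Hamiltonian $H$ of \eqref{eq:H-Sch-short} are built from this $L$ exactly as in Section~\ref{sec:quantum-implementation}, and the simply supported plate conditions $u=\Delta u=0$ on $\partial\Omega$ are what make $\mathcal{A}=(-\Delta)^2$ self-adjoint and nonnegative; they are imposed on both components $w,v$ of the embedded state as stated before the corollary.

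Next I would invoke Theorem~\ref{thm:complexity} with this $L$. Since the hypotheses of Theorem~\ref{thm:error-fully-discrete-finish} are assumed, the theorem gives query complexity $\widetilde{\mathcal{O}}\bigl(u_r(\|L\|\sqrt{T\log(u_r/\varepsilon)}+\log(u_r/\varepsilon))\bigr)$; inserting $\|L\|=\Theta(d/h^2)$ produces the claimed bound $\widetilde{\mathcal{O}}\bigl(u_r(\tfrac{d}{h^2}\sqrt{T\log(u_r/\varepsilon)}+\log(u_r/\varepsilon))\bigr)$, and the same theorem gives $\mathcal{O}(u_r)$ calls to the state-preparation oracles for $\bm{\psi}_0$ and $\bm{b}$. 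To interpret $u_r$, I would note that on the uniform grid the nodal value vectors satisfy $\|\bm{u}_0\|=\Theta\bigl(h^{-d/2}\|u_0\|_{L^2(\Omega)}\bigr)$, and similarly for $\bm{b}$ and $\bm{u}_h(T)$, so the $h$-dependent factors cancel in the ratio $u_r=(\|\bm{u}_0\|+T\|\bm{b}\|)/\|\bm{u}_h(T)\|$ and yield $u_r=\mathcal{O}\bigl((\|u_0\|_{L^2(\Omega)}+T\|f\|_{L^2(\Omega)})/\|u(T)\|_{L^2(\Omega)}\bigr)$.

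The routine part is entirely the norm bound; the only point requiring a little care is confirming that the second-order finite-difference Laplacian with the prescribed plate boundary conditions fits the factorized framework $\mathcal{A}=\mathcal{L}^\dagger\mathcal{L}$ with a \emph{symmetric} discrete $L$, so that the block construction of $H$ and the estimates of Theorems~\ref{thm:error-fully-discrete-finish} and~\ref{thm:complexity} apply verbatim. Once that is in place the corollary is an immediate specialization, exactly parallel to Corollary~\ref{cor:heat-complexity}.
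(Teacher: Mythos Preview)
Your proposal is correct and follows exactly the paper's approach: the paper's proof consists of the single observation that $\|L\|=\mathcal{O}(d/h^2)$ for this discretization and then substitutes this into Theorem~\ref{thm:complexity}. Your eigenvalue/Kronecker-sum computation and the $u_r$ interpretation are just the (helpful) details behind that one line.
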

\begin{proof}
	For the standard second-order finite-difference discretization on
	$\Omega=(0,1)^d$, the discrete Laplacian satisfies $\|L\|=O(d/h^2)$.
	Substituting this bound into Theorem~\ref{thm:complexity} yields the stated
	complexity estimate.
\end{proof}

\section{Long-time simulation for dissipative dynamics and linear algebra problems}
\label{sec:longtime}

The query complexity bound in Theorem~\ref{thm:complexity} depends explicitly on the final time
parameter~$T$ through a polynomial-type dependence, up to polylogarithmic factors. In the long-time regime for strictly dissipative generators,
however, the dependence on the physical final time can be removed.

Assume that $\mathcal{A}=\mathcal{L}^\dagger\mathcal{L}$ is self-adjoint and has a spectral gap
$\sigma(\mathcal{A})\subset[\lambda_0,\infty)$ for some $\lambda_0>0$. Then the semigroup decays
exponentially,
\begin{equation*}
	\bigl\|e^{-\mathcal{A}T}\bigr\|_{\mathrm{L}(L^2(\Omega),L^2(\Omega))} 
	:=\sup_{v\in L^2(\Omega)}\frac{\|e^{-\mathcal{A}T}v\|_{L^2(\Omega)}}{\|v\|_{L^2(\Omega)}}
	\le e^{-\lambda_0 T},
\end{equation*}
and by Duhamel's principle
\begin{equation}\label{eq:longtime 1}
	u(T)=e^{-\mathcal{A}T}u_0+\int_0^T e^{-\mathcal{A}(T-s)}f\,\mathrm{d}s
	=\mathcal{A}^{-1}f+e^{-\mathcal{A}T}\bigl(u_0-\mathcal{A}^{-1}f\bigr).
\end{equation}
Consequently,
\begin{equation}\label{eq:longtime 2}
	\|u(T)-\mathcal{A}^{-1}f\|_{L^2(\Omega)}
	\le e^{-\lambda_0 T}\,\|u_0-\mathcal{A}^{-1}f\|_{L^2(\Omega)}.
\end{equation}
Given any target accuracy $\varepsilon>0$, it is sufficient to simulate the dynamics up to the
truncated time
\[
\widetilde T:=\frac{1}{\lambda_0}\log\frac{C}{\varepsilon},
\]
for a problem-dependent constant $C\ge \|u_0-\mathcal{A}^{-1}f\|_{L^2(\Omega)}$, in order to ensure
$\|u(T)-\mathcal{A}^{-1}f\|_{L^2(\Omega)}\le \varepsilon$ for all $T\ge \widetilde T$. In particular,
the contribution of the initial datum is exponentially suppressed, and one may apply
Theorem~\ref{thm:complexity} with the time parameter set to $\widetilde T$, so that the resulting
query complexity depends on $\widetilde T$ rather than on the physical final time~$T$.

Combining this observation with Theorem~\ref{thm:complexity}, we obtain the query
complexity of our quantum algorithm in
the long-time regime $T\gg \lambda_0^{-1}\log(1/\varepsilon)$.

\begin{corollary}\label{cor:steady-complexity} 
	Assume that the spatial discretization of \(\mathcal A\) gives a Hermitian
	positive definite matrix
	\[
	A=L^\dagger L,\qquad
	\sigma(A)\subset[\lambda_0,1],
	\qquad 0<\lambda_0\le 1,
	\]
	after the normalization of the block-encoding, and that the quantum access
	assumptions of Theorem~\ref{thm:complexity} hold.
	Consider the discrete problem
	\[
	\frac{\d \bm u}{\d t}=-A\bm u(t)+\bm b,
	\qquad
	\bm u(0)=\bm u_0 .
	\]
	Let \(\bm u_\infty=A^{-1}\bm b\neq0\), and let \(T\ge \widetilde T\), where
	\[
	\widetilde T
	=
	\lambda_0^{-1}\log(C_{\rm lt}/\varepsilon),
	\qquad
	C_{\rm lt}\ge
	8\max\left\{
	1,\frac{\|\bm u_0-\bm u_\infty\|}{\|\bm u_\infty\|}
	\right\}.
	\]
	Then there exists a quantum algorithm that prepares an
	\(O(\varepsilon)\)-approximation of the normalized long-time solution state
	\(|\bm u(T)\rangle\) with \(\Omega(1)\) success probability.
	
	More precisely, with
	\[
	u_r^{\rm lt}
	:=
	\frac{\|\bm u_0\|+\widetilde T\|\bm b\|}
	{\|\bm u(\widetilde T)\|},
	\]
	the total number of queries to the block-encoding oracle for \(L\) is
	\[
	\widetilde{O}\!\left(
	u_r^{\rm lt}
	\left(
	\sqrt{\frac{\log(C_{\rm lt}/\varepsilon)}{\lambda_0}\log\frac{u_r^{\rm lt}}{\varepsilon}}
	+
	\log\frac{u_r^{\rm lt}}{\varepsilon}
	\right)
	\right),
	\]
	and the number of calls to the state-preparation oracles is
	$
	O(u_r^{\rm lt}).
	$
\end{corollary}

\begin{remark}
	Under the same dissipativity and spectral-gap assumptions, the LCHS-based
	long-time algorithm in~\cite{YOA25} also achieves a query complexity
	independent of the physical final time \(T\), with
	\(\widetilde{O}(\log^3(1/\varepsilon))\) dependence on the target
	accuracy. In comparison, for fixed \(\lambda_0\) and fixed data ratio
	\(C_{\rm lt}\), Corollary~\ref{cor:steady-complexity} gives a leading
	\(\log^2(1/\varepsilon)\) dependence, up to the polylogarithmic overheads in
	Theorem~\ref{thm:complexity}. This improvement is due to the Gaussian tail of
	the Kannai kernel.
\end{remark}

We now interpret the long-time reduction as a linear systems solver for
Hermitian positive definite problems 
\[A \bm x = \bm b, \quad A=L^\dagger L.\]
{\color{black}
	Relabeling the
	normalized operator again by \(A\), we may assume
	\[
	\sigma(A)\subset[1/\kappa,1],
	\qquad
	\|L\|=O(1),
	\]
	where $\kappa$ is the condition number of $A$, 
	so that the spectral gap in the long-time estimate is
	\(\lambda_0=\Theta(1/\kappa)\).
}

Combining this normalization with Corollary~\ref{cor:steady-complexity} yields the
following corollary.

\begin{corollary}
	\label{cor:lin-sys}
	Let \(A\in\mathbb C^{N\times N}\) be Hermitian positive definite and admit a
	factorization \(A=L^\dagger L\). After normalization, assume that
	\(\sigma(A)\subset[1/\kappa,1]\), and assume the hypotheses of
	Theorem~\ref{thm:complexity}. Then, for any \(0<\varepsilon<1\), there exists
	a quantum algorithm that, given a block-encoding of \(L\) and a
	state-preparation oracle for \(\bm b\), outputs with success probability
	\(\Omega(1)\) a normalized state \(\ket{\bm{x}^{\mathrm{out}}}\) satisfying
	\[
	\bigl\|
	\ket{\bm{x}^{\mathrm{out}}}-\ket{\bm{x}}
	\bigr\|
	\le \varepsilon,
	\qquad
	\ket{\bm{x}}
	=
	\frac{A^{-1}\bm b}{\|A^{-1}\bm b\|}.
	\]
	The number of queries to the block-encoding oracle for \(L\) is
	\[
	\widetilde{O}\!\left(
	\kappa^{3/2}
	\left(\log\frac{\kappa}{\varepsilon}\right)^2
	\right).
	\]
	The number of calls to the state-preparation oracle for \(\bm b\) is
	\[
	O\!\left(
	\kappa\log\frac{1}{\varepsilon}
	\right).
	\]
\end{corollary}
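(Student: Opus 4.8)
The plan is to derive this from the long-time analysis of Corollary~\ref{cor:steady-complexity}, i.e.\ ultimately from Theorem~\ref{thm:complexity} applied to a rescaled inhomogeneous evolution started from zero. \textbf{Step~1 (normalization).} From the $(\alpha_L,a_L,0)$-block-encoding of $L$ with $\alpha_L\ge\|L\|$, put $\alpha:=\alpha_L^2\ge\|L\|^2=\|A\|$ and pass to the equivalent system $(A/\alpha)\bm x=(\bm b/\alpha)$, whose solution is still $\bm x=A^{-1}\bm b$. Then $\|L/\sqrt\alpha\|=\mathcal O(1)$, $\lambda_{\max}(A/\alpha)=\Theta(1)$ and $\lambda_{\min}(A/\alpha)=\Theta(1/\kappa)$, so the normalized generator $\widetilde{\mathcal A}:=A/\alpha=(L/\sqrt\alpha)^\dagger(L/\sqrt\alpha)$ is self-adjoint, positive, and has spectral gap $\lambda_0=\Theta(1/\kappa)$.

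\textbf{Step~2 (long-time reduction).} Consider $\mathrm d u/\mathrm d t=-\widetilde{\mathcal A}u+\bm b/\alpha$ with $u(0)=0$. By Duhamel, $u(t)=\widetilde{\mathcal A}^{-1}(\bm b/\alpha)-e^{-\widetilde{\mathcal A}t}\widetilde{\mathcal A}^{-1}(\bm b/\alpha)$, and since $\widetilde{\mathcal A}^{-1}(\bm b/\alpha)=A^{-1}\bm b=\bm x$ and $\|e^{-\widetilde{\mathcal A}t}\|\le e^{-\lambda_0 t}$, choosing $\widetilde T:=\lambda_0^{-1}\log(C/\varepsilon)=\Theta(\kappa\log(1/\varepsilon))$ with a suitable absolute constant $C$ gives $\|u(\widetilde T)-\bm x\|\le e^{-\lambda_0\widetilde T}\|\bm x\|\le(\varepsilon/2)\|\bm x\|$, hence $\|\,\ket{u(\widetilde T)}-\ket{\bm x}\,\|\le\varepsilon$ and also $\|u(\widetilde T)\|=\Theta(\|\bm x\|)$.

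\textbf{Step~3 (complexity).} Apply Theorem~\ref{thm:complexity} to this system with $T=\widetilde T$, $\bm u_0=0$, forcing $\bm b/\alpha$, and target accuracy $\Theta(\varepsilon)$, so that together with Step~2 and the triangle inequality the output is an $\varepsilon$-approximation of $\ket{\bm x}$. Because $\|\bm b/\alpha\|=\|\widetilde{\mathcal A}\bm x\|\le\lambda_{\max}(\widetilde{\mathcal A})\|\bm x\|=\mathcal O(\|\bm x\|)=\mathcal O(\|u(\widetilde T)\|)$, the relative-size factor of Theorem~\ref{thm:complexity} satisfies $u_r=(\|\bm u_0\|+\widetilde T\|\bm b/\alpha\|)/\|u(\widetilde T)\|=\mathcal O(\widetilde T)=\mathcal O(\kappa\log(1/\varepsilon))$, and in particular $\log(u_r/\varepsilon)=\widetilde{\mathcal O}(\log(1/\varepsilon))$. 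Substituting $\|L/\sqrt\alpha\|=\mathcal O(1)$, $T=\Theta(\kappa\log(1/\varepsilon))$ and this bound on $u_r$ into the query bound $\widetilde{\mathcal O}\!\bigl(u_r(\|L\|\sqrt{T\log(u_r/\varepsilon)}+\log(u_r/\varepsilon))\bigr)$ gives $\widetilde{\mathcal O}\!\bigl(\kappa\log(1/\varepsilon)\cdot\sqrt\kappa\,\log(1/\varepsilon)\bigr)=\widetilde{\mathcal O}\!\bigl(\kappa^{3/2}\log^2(1/\varepsilon)\bigr)$, while the number of state-preparation calls is $\mathcal O(u_r)=\mathcal O(\kappa\log(1/\varepsilon))$, all to $O_b$ (the oracle $O_\psi$ being trivial since $\bm u_0=0$). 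Finally, a block-encoding of $A=L^\dagger L$ and the dilation Hamiltonian $H$ of Section~\ref{sec:quantum-implementation} each cost $\mathcal O(1)$ queries to $U_L,U_L^\dagger$, so reporting queries to the block-encoding of $A$ instead of $H$ only changes constants.

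\textbf{Main obstacle.} The substitutions are routine; the one point requiring care is Step~3, namely verifying that $u_r$ inflates the cost by only one extra factor of $\kappa$ and one of $\log(1/\varepsilon)$. This hinges on the two-sided estimates $\|u(\widetilde T)\|=\Theta(\|\bm x\|)$ and $\|\bm b/\alpha\|=\mathcal O(\|\bm x\|)$, both consequences of the normalization $\lambda_{\max}(\widetilde{\mathcal A})=\Theta(1)$ together with the exponential-decay bound of Step~2, and on tracking the dependence of $\widetilde T$ on $\lambda_0=\Theta(1/\kappa)$, which is precisely where the $\kappa^{3/2}$ (rather than $\kappa$) scaling originates.
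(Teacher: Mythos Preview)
Your proposal is correct and follows exactly the route the paper takes: normalize so that $\lambda_{\max}(A/\alpha)=\Theta(1)$ and hence $\lambda_0=\Theta(1/\kappa)$, run the Kannai-based simulator on the zero-initial-data inhomogeneous problem up to $\widetilde T=\Theta(\kappa\log(1/\varepsilon))$, and substitute into Theorem~\ref{thm:complexity} (equivalently, Corollary~\ref{cor:steady-complexity}). Your handling of $u_r$ via $\|\bm b/\alpha\|=\|\widetilde{\mathcal A}\bm x\|\le\|\bm x\|=\Theta(\|u(\widetilde T)\|)$ is the right way to close the loop, and the resulting $u_r=\mathcal O(\kappa\log(1/\varepsilon))$ is precisely what produces both the $\kappa^{3/2}\log^2(1/\varepsilon)$ query count and the $\kappa\log(1/\varepsilon)$ state-preparation count; the paper's proof is the same substitution stated more tersely.
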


{\color{black}
	\begin{remark}
		\label{rem:comparison-vtaa}
		The long-time Kannai solver in Corollary~\ref{cor:lin-sys} should be viewed
		as a structured linear-system solver arising from the dissipative dynamics
		with factorized generator \(A=L^\dagger L\). Its query complexity
		\(\widetilde{O}(\kappa^{3/2})\) 
		improves over early HHL-type
		algorithms~\cite{HHL09} and direct inverse block-encodings with naive postselection, which
		have quadratic condition-number dependence in standard formulations.
		
		The \(\kappa^{3/2}\) dependence in Corollary~\ref{cor:lin-sys} comes from a
		global worst-case LCU analysis of
		\[
		A^{-1}
		=
		\int_0^\infty e^{-tA}\,\mathrm dt .
		\]
		After normalization \(\sigma(A)\subset[1/\kappa,1]\), the outer integral has
		effective length \(\widetilde{O}(\kappa)\), while the Kannai
		implementation of \(e^{-tA}\) uses auxiliary unitary time
		\(\widetilde{O}(\sqrt t)\), hence at most
		\(\widetilde{O}(\sqrt{\kappa})\). Their product gives the
		\(\widetilde{O}(\kappa^{3/2})\) scaling.

		This suggests that the global worst-case \(\kappa^{3/2}\) scaling is not
		intrinsic to the Kannai transform.
		If the same
		semigroup representation is combined with the standard variable-time
		amplitude-amplification (VTAA) or optimized state-conversion machinery from quantum
		linear-system algorithms~\cite{Ambainis12,CKS17,SSO19}, then the dependence on
		\(\kappa\) can in principle be improved to
		\(\widetilde{O}(\kappa)\). Indeed, for a spectral component with
		eigenvalue \(\lambda\), it suffices to evolve for time \(O(1/\lambda)\), so
		the corresponding Kannai auxiliary time is \(O(1/\sqrt{\lambda})\). The
		amplification factor for that component scales as \(O(\kappa\lambda)\), and
		hence
		\[
		O(1/\sqrt{\lambda})\,O(\kappa\lambda)
		=
		O(\kappa\sqrt{\lambda})
		\le O(\kappa).
		\]
		Thus, with the additional VTAA/state-conversion framework, one expects to recover the near-linear condition-number dependence of optimized modern QLSA methods, up
		to polylogarithmic factors. We do not develop the full variable-time
		construction here, since the focus of the present paper is the
		transmutation-based simulation framework.
	\end{remark}
}
{\color{black}
	\begin{remark}\label{rem:laplace-qevt}
		The semigroup identity used above also connects our long-time solver to the
		Laplace-transform-based quantum eigenvalue transformation framework of
		An et al.~\cite{ACLY26}. In that framework, matrix functions are represented
		through identities of the form
		\[
		h(A)
		=
		\int_0^\infty g(t)e^{-tA}\,\mathrm dt ,
		\]
		and inverse powers \(A^{-k}\) are important examples. The direct long-time
		Kannai solver corresponds to the special case \(h(z)=z^{-1}\) with
		\(g(t)=1\).
		
		The distinction is structural. The Laplace-transform QET framework is more
		general: it implements the semigroup \(e^{-tA}\) by LCHS and therefore does
		not require an explicit factorization \(A=L^\dagger L\). Our construction,
		on the other hand, exploits precisely this factorization to implement
		\(e^{-tA}\) by the Kannai transform, namely as a Gaussian average of unitary
		wave evolutions generated by \(L\). This yields the \(\sqrt{t}\)-type
		auxiliary-time scaling, whereas a generic LCHS implementation does not in
		general enjoy this factorization-induced square-root scaling.
	\end{remark}
}

\section{Other transmutation methods for quantum algorithms}\label{sec:other_transmutation}
We collect further examples where a non-unitary evolution can be represented as a superposition of
reversible dynamics, consistent with the transmutation viewpoint.

Let $\mathcal{A}$ be a linear operator and consider two evolution families $w(s,\mathcal{A})$ and
$u(t,\mathcal{A})$. Following \cite{Hersh74Transmutation}, a transmutation from the $s$-dynamics to
the $t$-dynamics is an identity of the form
\begin{equation}\label{eq:transmutation-abstract}
	u(t,\mathcal{A})
	=\int_{\Gamma} K(t,s)\,w(s,\mathcal{A})\,\mathrm{d}s,
\end{equation}
where the kernel $K(t,s)$ is independent of the particular realization of $\mathcal{A}$.
In the examples below, $w(s,\mathcal{A})$ is unitary, or can be obtained from a unitary dynamics.
The formula \eqref{eq:transmutation-abstract} then yields a continuous linear combination of unitary
evolutions applied to the initial data.

\subsection{Spherical means and the Euler--Poisson--Darboux equation}
A classical example is the Euler--Poisson--Darboux equation \cite{courantHilbertII}
\[
u_{tt}+\frac{d-1}{t}u_t=\Delta_x u,
\qquad
u(0,x)=u_0(x),\quad u_t(0,x)=0.
\]
If $w$ solves the $d$-dimensional wave equation
\[
w_{tt}=\Delta_x w,
\qquad
w(0,x)=u_0(x),\quad w_t(0,x)=0,
\]
then the transmutation formula reads
\begin{equation}\label{eq:EPD-transmutation}
	u(t,x)
	= c_d \int_{-1}^{1} w(\lambda t ,x)\,
	\bigl(1-\lambda^2\bigr)^{\frac{d-3}{2}}\,\mathrm{d}\lambda,
\end{equation}
which is a special case of the Delsarte--Lions transmutation \cite{Lions1956,Lions1959Weinstein}.

Introducing the first-order wave formulation similar to \eqref{eq:def psi-wv}--\eqref{eq:unitary dynamics}, one has 
\[
w(\cdot,s)=\Pi_1\mathcal{U}(s)\bm{\psi}_0,
\qquad
\bm{\psi}_0=[\,u_0\;\;0\,]^\top,
\]
where $\mathcal{U}(s)=e^{s\widetilde{\mathcal{L}}}$ with 
$\widetilde{\mathcal{L}} = -\ket{0}\bra{1}\otimes \mathcal{L}^\dagger + \ket{1}\bra{0}\otimes \mathcal{L}$.
Then \eqref{eq:EPD-transmutation} becomes
\[
u(t,\cdot)
=\frac{c_d}{t}\,\Pi_1\int_{-t}^{t}\Bigl(1-\frac{s^2}{t^2}\Bigr)^{\frac{d-3}{2}}
\mathcal{U}(s)\bm{\psi}_0\,\mathrm{d}s .
\]
Approximating the integral by a quadrature with weights $\{c_j\}_{j=1}^M$ and nodes
$\{s_j\}_{j=1}^M\subset[-t,t]$ yields an LCU form
$\sum_{j=1}^M c_j\,\mathcal{U}(s_j)\bm{\psi}_0$ with coefficient $\ell_1$-norm
$\alpha_c=\sum_j|c_j|\approx \int_{-t}^{t}|K(t,s)|\,ds$.
Then we have
\[
\int_{-t}^{t}|K(t,s)|\,\d s 
=\frac{c_d}{t} \int_{-t}^{t} (1-\frac{s^2}{t^2})^{\frac{d-3}{2}}\;\d s
= O(c_d),
\qquad\text{hence}\qquad
\alpha_c = O\big(c_d\big).
\]
Therefore, for any target precision $\varepsilon\in(0,1)$, there exists a quantum algorithm
that prepares an $\varepsilon$-approximation of the normalized state proportional to $u(T)$
with $\Omega(1)$ success probability and a success flag, using
\[
\widetilde{O}\left(
\,\frac{ c_d\|u_0\|_{L^2(\Omega)}}{\|u(T)\|_{L^2(\Omega)}}  
\|L\|T
\right)
\]
queries to a block-encoding of the $L$
associated with $\mathcal{L}$.

\begin{remark}
	This transmutation perspective has two immediate advantages for quantum algorithms.
	First, it represents the non-unitary dynamics associated with the second-order-in-time equation as a
	continuous superposition of unitary wave propagators.
	Second, it removes the explicit time dependence in the coefficient $(d-1)/t$ by transferring it to
	the integration kernel, while the underlying generator $\widetilde{\mathcal{L}}$ is time independent.
\end{remark}

\subsection{Transport--heat averaging}
Following \cite{LuZuazua2016}, consider the transport equation on $\mathbb{R}^d$,
\[
y_t+\alpha\cdot\nabla_x y=0,
\qquad
y(0,x;\alpha)=y_0(x),
\]
where $\alpha$ is a standard Gaussian random vector with density
$K(\alpha)=(2\pi)^{-d/2}e^{-|\alpha|^2/2}$. The solution is
$y(t,x;\alpha)=y_0(x-t\alpha)$ and its average is
\[
\tilde y(t,x)
=\int_{\mathbb{R}^d}K(\alpha)\,y_0(x-t\alpha)\,\mathrm{d}\alpha.
\]
One checks that $\tilde y$ satisfies $\tilde y_t=t^{-1}\,\Delta_x \tilde y$, with $
\tilde y(0,\cdot)=y_0$.
Defining $u(t,x):=\tilde y\bigl(\sqrt{2t},x\bigr)$ yields the heat equation
\[
u_t-\Delta_x u=0,
\qquad
u(0,\cdot)=y_0.
\]
In operator form, set $\mathcal{T}(\alpha):=-\alpha\cdot\nabla_x$. Then
\begin{equation}\label{eq:u-from-transport-avg}
	u(t,\cdot)=\int_{\mathbb{R}^d}K(\alpha)\,e^{\sqrt{2t}\,\mathcal{T}(\alpha)}\,y_0\,\mathrm{d}\alpha,
\end{equation}
and $e^{s\mathcal{T}(\alpha)}$ is unitary on $L^2(\mathbb{R}^d)$.

For the periodic setting discretized by a Fourier spectral method, the discrete generator is
skew-Hermitian and can be written as $\mathcal{T}_h(\alpha)=-iH(\alpha)$ with $H(\alpha)$ Hermitian.
Truncating the Gaussian measure to $|\alpha|\le R_\alpha$ with
$R_\alpha=O(\sqrt{\log(1/\varepsilon)})$ yields $\|H(\alpha)\|=O(R_\alpha\|H_0\|)$ for a
fixed Hermitian matrix $H_0$ determined by the spatial grid. The resulting quantum algorithm has
query complexity
\[
\widetilde{O}\!\left(
\frac{\|u_0\|_{L^2(\Omega)}}{\|u(T)\|_{L^2(\Omega)}}\,\|H_0\|\sqrt{T\log\frac{1}{\varepsilon}}
\right),
\]
with polylogarithmic factors suppressed.

\begin{remark}
	For the periodic heat equation, the transport--heat averaging representation
	\eqref{eq:u-from-transport-avg} yields a quantum algorithm whose matrix-query
	complexity matches that of the Kannai-transform-based construction at the level
	of leading-order scaling in $T$ and $\varepsilon$.
	However, this approach relies on the discrete transport generator
	$\mathcal{T}_h(\alpha)$ being anti-Hermitian so that
	$\exp\bigl(s\,\mathcal{T}_h(\alpha)\bigr)$ is unitary.
	This is naturally satisfied by Fourier spectral discretizations under periodic
	boundary conditions, but it is incompatible with standard upwind schemes for
	first-order transport, which are dissipative and therefore do not yield an
	anti-Hermitian discrete operator.
	Moreover, extending the transport--averaging construction beyond periodic
	settings is nontrivial, since other boundary conditions and general domains do
	not preserve the simple translation structure underlying
	\eqref{eq:u-from-transport-avg}.
\end{remark}

\section{Numerical tests}\label{sec:numerical-tests}

In this section we present numerical experiments to validate (i) the correctness of the
Kannai transform used to recover the heat solution from a wave equation and
(ii) the correctness of the unitary embedding obtained via the first-order factorization.
We consider the one-dimensional heat equation on $\Omega=[0,1]$,
\begin{equation}\label{eq:heat-1d}
	\partial_t u(t,x)=\partial_{xx}u(t,x),\qquad u(0,x)=u_0(x),
\end{equation}
with two different boundary conditions. In both cases we compute the solution at $T=1$ and
compare the Kannai-based solution with a classical finite-difference (FD) time-marching solution.

All results below are obtained via classical simulation and serve to validate (i) and (ii), rather than to demonstrate quantum speedup. 
We evaluate the truncated Kannai Gaussian integral on $[-R,R]$ with $R=10$ using the trapezoidal rule with $M=800$ panels, and implement the unitary propagator $e^{-iHs}$ in MATLAB via \texttt{expm} (i.e., \texttt{expm(-1i*H*s)}).

\noindent\textbf{Case 1: inhomogeneous Dirichlet boundary condition.}
We set $$u(t,0)=u(t,1)=1, \quad u_0(x)=\cos(2\pi x).$$ Using the staggered discretization with
$h=1/N_x$ ($N_x=50$), the embedded system reads
$$\frac{\d}{\d s}\bm\psi_h(s)=-iH \,\bm\psi_h(s)+s \bm f_{\rm bd}, \quad 
H = i \ket0\!\bra1\otimes L^\dagger- i \ket1\!\bra0\otimes L.$$
Here $L$ is defined in \eqref{eq:L-1D}
and \(\bm f_{\rm bd}\) denotes the boundary-lifting contribution induced by the
inhomogeneous Dirichlet data.
We approximate $u(T,\cdot)$ by the Kannai quadrature
\[
\bm u_f^{h}(T)=\sum_{m=0}^{M}\omega_m\Bigl(\kappa_T(s_m)U(s_m)\bm\psi_0
+\Lambda_T(s_m)U(s_m)\bm f_{\rm bd}\Bigr),
\qquad U(s)=e^{-iHs},
\]
where $(s_m,\omega_m)$ are trapezoidal nodes/weights on $[-R,R]$ and
$$\kappa_T(s)=(4\pi T)^{-1/2}e^{-s^2/(4T)},\quad \text{and} \quad \Lambda_T(\sigma)
=
\sqrt{\frac{T}{\pi}}e^{-\sigma^2/(4T)}
-
\frac{|\sigma|}{2}
\operatorname{erfc}\!\left(\frac{|\sigma|}{2\sqrt T}\right).
$$
Figure~\ref{fig:numerical-tests} (left) compares the resulting interior values with a standard
finite-difference time-marching solution.

\noindent\textbf{Case 2: homogeneous Neumann boundary condition.}
We impose $$\partial_x u(t,0)=\partial_x u(t,1)=0,\quad u_0(x)=\cos(\pi x).$$
With the staggered placement described in Remark~\ref{remark:Neumann bd},
the semi-discrete Neumann problem takes the form of a homogeneous first-order
system. Applying the same Kannai-transform quadrature as above gives the result
shown in Fig.~\ref{fig:numerical-tests} (right).

In both boundary-condition settings, the numerical solutions obtained from the Kannai transform
agree with the reference finite-difference solutions, which supports the correctness of
(i) and (ii).

\begin{figure}[t]
	\centering
	\begin{minipage}{0.48\textwidth}
		\centering
		\includegraphics[width=\linewidth]{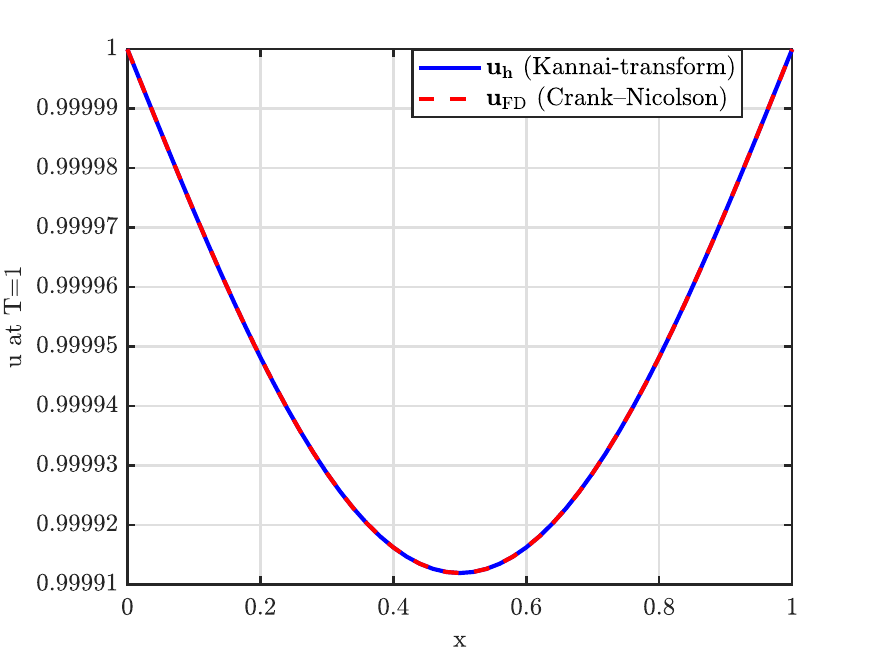}
		\caption*{(a) Case 1: Dirichlet $u(t,0)=u(t,1)=1$.}
	\end{minipage}\hfill
	\begin{minipage}{0.48\textwidth}
		\centering
		\includegraphics[width=\linewidth]{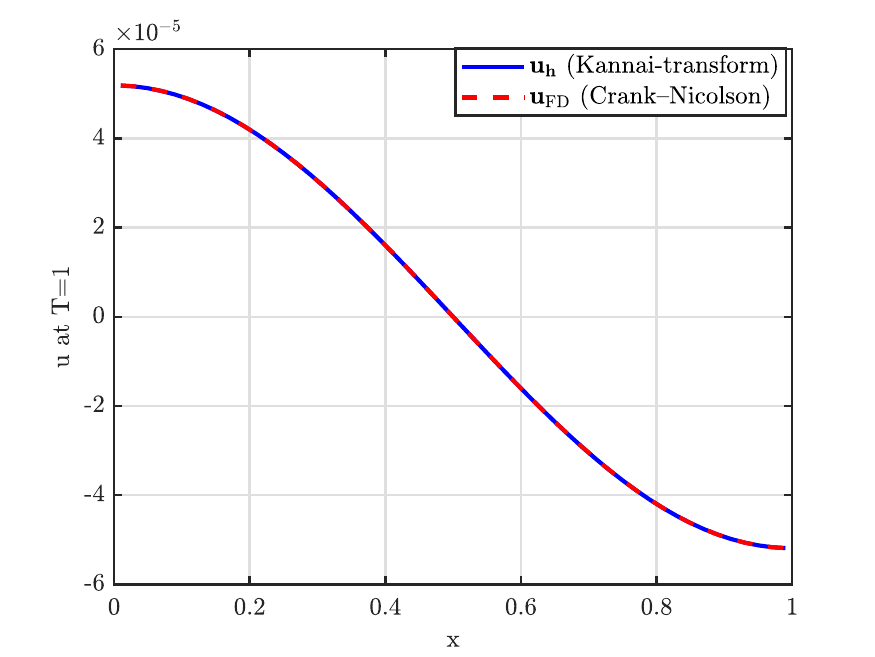}
		\caption*{(b) Case 2: Neumann $\partial_x u|_{\partial\Omega}=0$.}
	\end{minipage}
	\caption{Comparison between the Kannai-based numerical solution 
		and the classical finite-difference solution at $T=1$.}
	\label{fig:numerical-tests}
\end{figure}

\section{Conclusion}\label{sec:conclusions}
We developed a transmutation-based quantum framework for dissipative dynamics with factorized generators \(A=L^\dagger L\). By exploiting the Gaussian Kannai kernel, the method achieves improved query complexity with square-root dependence on \(T\|A\|\) and only polylogarithmic dependence on the target accuracy, up to the standard state-preparation and output-norm factors. We illustrated the framework for representative parabolic PDEs and showed that, in the long-time regime under a spectral-gap assumption, it also yields a structured quantum linear solver. This advantage is tied to the factorized access model, compatibility with the Gaussian Kannai transmutation, and favorable output norms; otherwise it may be reduced. Future work includes quantum preconditioning and extensions to time-dependent and more general non-self-adjoint operators.

\section*{Acknowledgments}
SJ acknowledges the support of the NSFC grant No. 12341104, the Shanghai Pilot Program for Basic Research,  the Science and Technology Commission of Shanghai Municipality (STCSM) grant no. 24LZ1401200, the Shanghai Jiao Tong University 2030 Initiative, and the Fundamental Research Funds for the Central Universities.
CM was  supported by Fundamental and Interdisciplinary Disciplines Breakthrough Plan of the Ministry of Education of China (JYB2025XDXM112) and NSFC grant No. 12501607.
CM was also partially supported by the Science and Technology Commission of Shanghai Municipality (No.22DZ2229014).
EZ was funded by the Alexander von Humboldt-Professorship program,
the ERC Advanced Grant CoDeFeL, the Grant PID2023-146872OB-I00-DyCMaMod of MICIU (Spain),  the COST Actions  CA24136 - InterCoML and  CA24122--mSPACE, supported by COST (European Cooperation in Science and Technology),  the AFOSR 24IOE027 project, and the SURE-AI Centre grant 357482, Research Council of Norway.

	%
	\bibliographystyle{plain} 
	\bibliography{Refs_2}

\end{document}